\newcommand"[1]{\ensuremath{\sp{(#1)}}}
\newcommand{\diag}{\operatorname{diag}}
\newcommand{\innp}[1]{\left\langle #1 \right\rangle}
\newcommand{\mA}{\mathbf{A}}
\newcommand{\mB}{\mathbf{B}}
\newcommand{\ones}{\mathds{1}}
\newcommand{\zeros}{\mathbf{0}}
\newcommand{\vx}{\mathbf{x}}
\newcommand{\vy}{\mathbf{y}}
\newcommand{\vyb}{\overline{\mathbf{y}}}
\newcommand{\vz}{\mathbf{z}}
\newcommand{\vu}{\mathbf{u}}
\newcommand{\vbeta}{\bm{\beta}}
\newcommand{\defeq}{\stackrel{\mathrm{\scriptscriptstyle def}}{=}}
\newcommand{\etal}{\textit{et al}.}
\def\mathcolor#1#{\@mathcolor{#1}}
\def\@mathcolor#1#2#3{%
  \protect\leavevmode
  \begingroup
    \color#1{#2}#3%
  \endgroup
}
\theoremstyle{plain} \numberwithin{equation}{section}
\newtheorem{theorem}{Theorem}[section]
\newtheorem{lemma}[theorem]{Lemma}
\newtheorem{proposition}[theorem]{Proposition}
\newtheorem{fact}[theorem]{Fact}
\theoremstyle{definition}
\newtheorem{definition}[theorem]{Definition}
\newcommand{\opt}{\mathrm{OPT}}
\title{Solving Packing and Covering LPs in $\tilde{O}(\frac{1}{\epsilon^2})$ Distributed Iterations\\ with a Single Algorithm and Simpler Analysis}
\author{Jelena Diakonikolas and Lorenzo Orecchia\\
Computer Science Department, Boston University\\
\texttt{\{jelenad, orecchia\}@bu.edu}}
\date{}
\begin{document}
\maketitle

\begin{abstract}
Packing and covering linear programs belong to the narrow class of linear programs that are efficiently solvable in parallel and distributed models of computation, yet are a powerful modeling tool for a wide range of fundamental problems in theoretical computer science, operations research, and many other areas. Following recent progress in obtaining faster distributed and parallel algorithms for packing and covering linear programs, we present a simple algorithm whose iteration count matches the best known $\widetilde{O}(\frac{1}{\epsilon^2})$ for this class of problems. The algorithm is similar to the algorithm of Allen-Zhu and Orecchia \cite{AO-lp-parallel}, it can be interpreted as Nesterov's dual averaging, and it constructs approximate solutions to both primal (packing) and dual (covering) problems. However, the analysis relies on the construction of an approximate optimality gap and a primal-dual view, leading to a more intuitive interpretation. Moreover, our analysis suggests that all existing algorithms for solving packing and covering linear programs in parallel/distributed models of computation are, in fact, unaccelerated, and raises the question of designing accelerated algorithms for this class of problems.  
\end{abstract}

\section{Introduction}\label{sec:intro}

We consider packing and covering linear programs (LPs), i.e., the problems of the form:

\begin{minipage}[t]{.48\textwidth}
\begin{equation}\label{eq:packing}\tag{P}
\max \left\{ \innp{\mathbf{c}, \vx}: \mA\vx \leq \mathbf{b},\; \vx \geq \zeros\right\},
\end{equation}
\vspace{0pt}
\end{minipage}
\begin{minipage}[t]{.48\textwidth}
\begin{equation}\label{eq:covering}\tag{C}
\min \left\{\innp{\mathbf{b}, \vy}: \mA^T \vy \geq \mathbf{c},\; \vy \geq \zeros \right\},
\end{equation}
\vspace{0pt}
\end{minipage}
where $\mA \in \mathbb{R}^{m \times n}, \mA \geq \zeros,$ $\mathbf{b} \in \mathbb{R}^m, \mathbf{b} > \zeros$, and $\mathbf{c} \in \mathbb{R}^n, \mathbf{c}> \zeros$, $()^T$ denotes a matrix transpose, $\zeros$ is an all-zeros vector, and all inequalities are element-wise. Without loss of generality \cite{AwerbuchKhandekar2009,d-luby1993parallel,AO-lp-parallel}, the problems can be considered in their scaled form, so that $\mathbf{c} = \ones$, $\mathbf{b} = \ones$, and $\min_{ij: A_{ij}\neq 0} A_{ij} = 1$, where $\ones$ is an all-ones vector of the appropriate dimension. Moreover, since we are interested in solving (\ref{eq:packing}), (\ref{eq:covering}) approximately, the scaled matrix $\mA$ can be truncated so that its maximum element $\|\mA\|_{\infty}$ is at most $\mathrm{poly}(m,n, 1/\epsilon)$ \cite{d-luby1993parallel,wang2015unified}, where $m, n$ are the numbers of rows and columns of $\mA$, respectively, and $\epsilon$ is a given approximation parameter. From now on, we assume that the problems are stated in such a scaled and truncated form.

Packing and covering LPs, and even general LPs, 
%General LPs (including packing and covering) 
are solvable sequentially in weakly polynomial time with $\log(\frac{1}{\epsilon})$ dependence on the accuracy $\epsilon$, e.g., via generic Interior Point Method (IPM) solvers. However, even the fastest known IPM solver crucially relies on sequential computation and global information. Moreover, the solver's work depends super-linearly on the input -- the algorithm runs in $\widetilde{O}((N + n^2)\sqrt{n}\log(1/\epsilon))$ time, where $N$ is the number of non-zero elements in the constraint matrix $\mA$ \cite{lee2015efficient}. On the other hand, as packing and covering LPs are equivalent to zero-sum matrix games, their $(1+\epsilon)$-approximate solutions can be obtained in $\widetilde{O}(\frac{\|\mA\|_{\infty}}{\epsilon})$ parallel iterations, each with $O(N)$ total work and $O(\log(N))$ depth, using e.g., the techniques of Nesterov~\cite{nesterov2005smooth} and Nemirovski~\cite{Mirror-Prox-Nemirovski}. However, as already discussed, even in the scaled and truncated form $\|\mA\|_{\infty}$ in general can only be assumed to depend polynomially on problem parameters $m, n, \epsilon$, leading to the overall super-linear computation.

Our focus is on algorithms with poly-logarithmic (in $N, \epsilon$) iteration count and linear in $N$ work per iteration, with at most $\log(N)$ depth. The price paid for this small number of iterations and overall near-linear work is polynomial dependence on the approximation parameter $\epsilon$. Within this category, we make the following distinction between parallel and distributed models of computation. In a distributed model, communication is represented by a bipartite graph in which there is a vertex associated with each variable $j$ and each constraint $i$, and an edge between them if and only if $j$ appears in $i$ with a non-zero coefficient \cite{kuhn2006price}. Information can be exchanged only over the edges of the graph. In contrast, in a parallel model, the memory is shared and complete (global) information can be accessed (as long as computation is performed in e.g., log-depth). For example, Young's algorithm for mixed packing and covering \cite{dc-young2001sequential} is parallel but not distributed, as it requires computing the sum of exponentials of constraint slacks over \emph{all the constraints}, in each iteration.  

Algorithms that fall into the described category have been known since the early 90s, starting with the parallel LP solver of Luby and Nisan \cite{d-luby1993parallel} that runs in ${O}(\frac{\log(n)\log(m/\epsilon)}{\epsilon^4})$ iterations. While the result of Luby and Nisan \cite{d-luby1993parallel} was extended to various settings, including distributed computation model \cite{kuhn2006price,AwerbuchKhandekar2009} and more general mixed packing and covering in both parallel \cite{dc-young2001sequential} and distributed \cite{manshadi2013distributed} settings, until recently there were no improvements on the $\widetilde{O}(\frac{1}{\epsilon^4})$ iteration count from \cite{d-luby1993parallel}. 
Recently, this bound was overcome in the work of Allen-Zhu and Orecchia \cite{AO-lp-parallel} and Mahoney \etal \cite{mahoney2016approximating}, in distributed and parallel settings, respectively, both leading to the $\widetilde{O}(\frac{1}{\epsilon^2})$ iteration count.\footnote{Allen-Zhu and Orecchia \cite{AO-lp-parallel} in fact claimed a $\widetilde{O}(\frac{1}{\epsilon^3})$ bound, however, minor modifications to their algorithm and the corresponding analysis produce the $\widetilde{O}(\frac{1}{\epsilon^2})$ bound.} 

We present a simple algorithm that is similar in spirit to the algorithm of Allen-Zhu and Orecchia \cite{AO-lp-parallel}, but results from a different regularization, which may be of independent interest. The analysis is greatly simplified compared to \cite{AO-lp-parallel}, and has a clear, intuitive interpretation as reducing optimality gap with rate $\frac{1}{k}$, where $k$ is the iteration count. Since the algorithm can be fully analyzed as (unaccelerated) dual averaging, the possibility of designing accelerated algorithms for this class of problems remains open.    
%
%The algorithm makes only dual averaging steps, which match (up to a constant) the steps that we would take if we were minimizing the second order Taylor approximation of the function. Gives essentially the same algorithm as \cite{AO-lp-parallel}, which suggests that all existing algorithms are not accelerated. It is an open question whether it is even possible to obtain an accelerated algorithm for this problem. 
%
%We rely on the structural properties of the packing LP. For example, we crucially use that the smoothened objective's gradient satisfies $\nabla f_{\alpha}(\vx) \in [-1, \infty)^n$.

\subsection{Related Work}
There is a long line of work on packing and covering LPs \cite{kuhn2006price,d-luby1993parallel,d-bartal1997global,dc-young2001sequential,d-papadimitriou1993linear,c-plotkin1995fast,c-fleischer2000approximating,c-garg2007faster,AwerbuchKhandekar2009,AO-lp-parallel,manshadi2013distributed,mahoney2016approximating,arora2012multiplicative,allen2015nearly,koufogiannakis2014nearly,bartal2004fast}. Among them, the first distinction can be made between width-dependent and width-independent algorithms. Width-dependent algorithms have iteration count with super-poly-logarithmic (typically linear or quadratic) dependence on the matrix width $\|\mA\|_{\infty}$. Such algorithms include, e.g.,  (i) the classical work of Plotkin, Shmoys, and Tardos \cite{c-plotkin1995fast}  and a more recent work of Arora, Hazan, and Kale \cite{arora2012multiplicative} that both require only oracle access to the matrix $\mA$, and (ii) more advanced optimization techniques of Nesterov \cite{nesterov2005smooth}, Nemirovski \cite{Mirror-Prox-Nemirovski}, and Bienstock and Iyengar \cite{bienstock2004faster} that leverage explicit knowledge of the matrix $\mA$.

Closer to our work are the algorithms with poly-logarithmic dependence (or no dependence at all) on the matrix width $\|\mA\|_{\infty}$, also known as width-independent algorithms. The work on width-independent algorithms was initiated by Luby and Nisan in \cite{d-luby1993parallel} and extended by Bartal, Byers, and Raz \cite{d-bartal1997global}, providing a parallel algorithm running in $O(\frac{\log(n)\log(m/\epsilon)}{\epsilon^4})$ iterations. Subsequently, similar iteration count was obtained in a more general distributed setting in \cite{kuhn2006price,AwerbuchKhandekar2009}. Moreover, a substantial progress was made for a more general class of positive LPs -- namely, mixed packing and covering LPs -- starting with both sequential and parallel algorithms of Young \cite{dc-young2001sequential}. In the parallel setting, the fastest algorithm is due to Mahoney \etal \cite{mahoney2016approximating}, which solves either pure packing or pure covering LP as a special case of mixed packing and covering LP in $O(\frac{\log(n)\log(n/\epsilon)\log(m\epsilon)}{\epsilon^2})$ iterations\footnote{As \cite{mahoney2016approximating} solves pure packing and covering LPs as special cases of mixed packing/covering LP feasibility problems, the $\log(n/\epsilon)$ factor in the iteration count is incurred due to a binary search.}, while in the distributed setting the algorithm of Allen-Zhu and Orecchia \cite{AO-lp-parallel} solves both (pure) packing and covering LPs with a single algorithm in $O(\frac{\log^2(N/\epsilon)}{\epsilon^2})$ iterations. 

While this paper focuses on width-independent solvers in the parallel and distributed models, sequential algorithms also exist.
Notably, among sequential algorithms, Koufogiannakis and Young \cite{koufogiannakis2014nearly} provide an algorithm that runs in time $O(N + \frac{\log(n)}{\epsilon^2}(n+m))$, while Allen-Zhu and Orecchia \cite{allen2015nearly} obtain an $O(N\frac{\log(N)\log(1/\epsilon)}{\epsilon})$-time packing LP solver and an $O(N\frac{\log(N)\log(1/\epsilon)}{\epsilon^{1.5}})$-time covering LP solver (subsequently improved to $O(N\frac{\log^2(N/\epsilon)\log(1/\epsilon)}{\epsilon})$ by Wang, Rao, and Mahoney in \cite{wang2015unified}). Obtaining a similar total work in a parallel or distributed computation model is an open question.

%Sequential algorithms, parallel algorithms, distributed algorithms... Width-dependent vs width independent...

\subsection{Notation and Preliminaries}

We assume w.l.o.g. that the accuracy $\epsilon$ is from the interval $[0, 1/4]$.

\paragraph{Notation.} We let $\log$ denote the natural logarithm. Similar to \cite{AO-lp-parallel}, we will use the following notation for the truncated gradient:
\begin{equation}\label{eq:trunc-grad}
T_{\nabla_j f(\vx)} = \begin{cases}
\nabla_j f_{\alpha}(\vx), & \text{ if } \nabla_j f_{\alpha}(\vx) \in [-1, 1],\\
1, & \text{ if } \nabla_j f_{\alpha}(\vx) > 1,
\end{cases}
\end{equation}
where $f_{\alpha}(\cdot)$ is the smoothened packing objective (introduced later in this section).

\paragraph{Convex and Concave Conjugates.}  
The following definitions and facts will be useful in the analysis.

\begin{definition}\label{def:cvx-cncv-conj}
The convex conjugate of a function $\psi:X\rightarrow \mathbb{R}$ is defined as $\psi^*(\vz) = \sup_{\vx\in X}\{\innp{\vz, \vx} - \psi(\vx)\}$. Similarly, concave conjugate of $\psi(\cdot)$ is defined as $\psi^*(\vz) = \inf_{\vx \in X}\{\innp{\vz, \vx} - \psi(\vx)\}$. 
\end{definition}

\begin{fact}
Convex conjugate of a convex function is a convex function. Concave conjugate of a concave function is a concave function.
\end{fact}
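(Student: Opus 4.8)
The plan is to observe that each conjugate is a pointwise extremum, taken over $\vx \in X$, of functions that are \emph{affine} in the conjugate variable $\vz$, and then to invoke the elementary stability of convexity under pointwise suprema (and of concavity under pointwise infima).

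First, for the convex conjugate, I would fix an arbitrary $\vx \in X$ and consider $g_{\vx}(\vz) \defeq \innp{\vz, \vx} - \psi(\vx)$. Since $\psi(\vx)$ is a constant independent of $\vz$ and $\vz \mapsto \innp{\vz, \vx}$ is linear, $g_{\vx}$ is affine in $\vz$, hence convex. By Definition~\ref{def:cvx-cncv-conj}, $\psi^*(\vz) = \sup_{\vx \in X} g_{\vx}(\vz)$, so $\psi^*$ is a pointwise supremum of a family of convex functions. I would then recall the standard fact that such a supremum is convex: its epigraph $\{(\vz, t) : \psi^*(\vz) \le t\}$ equals $\bigcap_{\vx \in X}\{(\vz, t) : g_{\vx}(\vz) \le t\}$, an intersection of convex sets and therefore convex; equivalently, for $\lambda \in [0,1]$ one bounds $g_{\vx}(\lambda \vz_1 + (1-\lambda)\vz_2) \le \lambda g_{\vx}(\vz_1) + (1-\lambda) g_{\vx}(\vz_2) \le \lambda \psi^*(\vz_1) + (1-\lambda)\psi^*(\vz_2)$ for every $\vx$ and then takes the supremum over $\vx$ on the left. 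Either way, $\psi^*$ is convex. Note that convexity of $\psi$ itself is not needed for this direction; it is stated only in parallel with the definition.

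Second, the concave case is entirely analogous, with the roles of $\sup$ and convexity replaced by $\inf$ and concavity: for fixed $\vx$, the same map $g_{\vx}(\vz) = \innp{\vz, \vx} - \psi(\vx)$ is affine and hence concave, and $\psi^*(\vz) = \inf_{\vx \in X} g_{\vx}(\vz)$ is a pointwise infimum of concave functions, whose hypograph is an intersection of convex sets; therefore $\psi^*$ is concave.

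There is no real obstacle here; the only mild technical point is that the supremum (resp.\ infimum) may equal $+\infty$ (resp.\ $-\infty$) for some $\vz$. This is handled by reading the statement for extended-real-valued functions, for which convexity/concavity, defined via epigraphs/hypographs or via the defining inequality interpreted on $\overline{\mathbb{R}}$, is still well defined and preserved under the operations above. In particular, no structural assumptions on $X$ (convexity, closedness, compactness) are required.
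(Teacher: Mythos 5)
Your proof is correct and is the standard textbook argument: the convex conjugate is a pointwise supremum of affine (hence convex) functions of $\vz$, and convexity is stable under pointwise suprema; dually, the concave conjugate is a pointwise infimum of affine functions, and concavity is stable under pointwise infima. The paper states this as a bare fact without proof, so there is nothing to compare against; your write-up would serve as a perfectly adequate justification. Your side remark is also accurate and worth noting: the hypothesis that $\psi$ itself be convex (resp.\ concave) is not used anywhere, so the conclusion holds for an arbitrary $\psi$ --- the paper's phrasing just mirrors the setting in which the fact is later applied (e.g.\ Fact~\ref{fact:danskin} and the computation of $\psi^*$ in Proposition~\ref{prop:regularizer-and-its-arg}).
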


For the cases we consider here, $X = \mathbb{R}^n$, and thus we can replace ``$\sup$'' and ``$\inf$'' from  Definition \ref{def:cvx-cncv-conj} by ``$\max$'' and ``$\min$'', respectively. 
The following fact is a simple corollary of Danskin's Theorem \cite{danskin2012theory,bertsekas1971control}:
\begin{fact}\label{fact:danskin}
Let $\psi:X\rightarrow \mathbb{R}$ for a closed, convex set $X$, and let $\psi^*$ be its convex conjugate. If $\psi$ is convex, then $\nabla \psi^*(\vz) = \arg\max_{\vx \in X}\{\innp{\vz, \vx} - \psi(\vx)\}$. Similarly, if $\psi$ is concave and $\psi^*$ is its concave conjugate, then $\nabla\psi^*(\vz) = \arg\min_{\vx \in X}\{\innp{\vz, \vx} - \psi(\vx)\}$.
\end{fact}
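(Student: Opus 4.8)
The plan is to obtain Fact~\ref{fact:danskin} directly from Danskin's Theorem by choosing the right parametrized family. For the convex case, write $\psi^*(\vz) = \max_{\vx \in X}\phi(\vz, \vx)$ with $\phi(\vz, \vx) \defeq \innp{\vz, \vx} - \psi(\vx)$. The crucial point is that for each fixed $\vx$ the map $\vz \mapsto \phi(\vz, \vx)$ is affine, hence smooth with $\nabla_{\vz}\phi(\vz, \vx) = \vx$. Danskin's Theorem then gives that $\psi^*$ is convex and, at any $\vz$ where the maximizer $\vx^*(\vz) \defeq \arg\max_{\vx\in X}\{\innp{\vz,\vx}-\psi(\vx)\}$ is unique, $\psi^*$ is differentiable there with $\nabla\psi^*(\vz) = \nabla_{\vz}\phi(\vz,\vx^*(\vz)) = \vx^*(\vz)$, which is the claim.

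The concave case follows by negation. If $\psi$ is concave then $-\psi$ is convex, and from Definition~\ref{def:cvx-cncv-conj} a short computation gives $\psi^*(\vz) = \inf_{\vx}\{\innp{\vz,\vx}-\psi(\vx)\} = -(-\psi)^*(-\vz)$, where $(-\psi)^*$ denotes the convex conjugate of $-\psi$. Applying the chain rule to the linear inner map $\vz\mapsto-\vz$ and then invoking the convex case yields $\nabla\psi^*(\vz) = \nabla(-\psi)^*(-\vz) = \arg\max_{\vx}\{\innp{-\vz,\vx}-(-\psi)(\vx)\} = \arg\min_{\vx}\{\innp{\vz,\vx}-\psi(\vx)\}$.

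The only delicate point---hence the main obstacle, though a mild one---is verifying the hypotheses of Danskin's Theorem, since here $X=\mathbb{R}^n$ is unbounded while the textbook statement assumes a compact index set. In all our applications the relevant $\psi$ are strongly convex and coercive regularizers (of generalized-entropy type), so $\innp{\vz,\vx}-\psi(\vx)$ is strictly concave and coercive in $\vx$ for every $\vz$: the maximizer therefore exists, is unique, and stays in a bounded set depending continuously on $\vz$, so one may localize to a compact neighborhood on which Danskin applies verbatim. I would record this uniqueness (equivalently, coercivity and strict convexity of $\psi$) as the standing assumption under which the fact is used; note that treating $\arg\max$ and $\arg\min$ as single points in the statement already presupposes it.
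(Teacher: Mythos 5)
Your proposal supplies the derivation the paper omits---the authors simply cite Danskin's Theorem and state the Fact---and it follows exactly the route they intend, so it is correct: the affine dependence of $\innp{\vz,\vx}-\psi(\vx)$ on $\vz$ gives $\nabla_{\vz}\phi(\vz,\vx)=\vx$, and the concave case reduces cleanly to the convex one through $\psi^*(\vz)=-(-\psi)^*(-\vz)$. One small caveat in your final paragraph is worth fixing: you call the relevant regularizers ``strongly convex,'' but the $\psi$ this paper actually feeds into the Fact (Proposition~\ref{prop:regularizer-and-its-arg}) is \emph{concave}, so the Fact is invoked in its concave-conjugate form; what makes the $\arg\min$ a singleton and lets you localize to a compact set is strict convexity of $\vx\mapsto\innp{\vz,\vx}-\psi(\vx)$ together with the lower bound on $\vz$ maintained by Proposition~\ref{prop:small-enough-z-k}, which yields coercivity on $\mathbb{R}^n_+$.
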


%\paragraph{Scaling of the problem.} As is standard, we will assume that the problem is rescaled so that the minimum non-zero element of the constraint matrix $\mA$ is equal to one and the objectives are reduced to $-\innp{\ones, \vx}$ and $-\innp{\ones, \vy}$ \cite{AO-lp-parallel,d-luby1993parallel,AwerbuchKhandekar2009}. Such scaling is needed solely for the purpose of the analysis; it is possible to state the algorithm for an arbitrary form of the packing and covering LP problems (see, e.g., \cite{AwerbuchKhandekar2009}). 

\paragraph{Smoothing.} Packing and  covering LPs (\ref{eq:packing}),(\ref{eq:covering}) can be stated jointly as follows:
\begin{equation}\label{eq:saddle-point}\tag{P-C}
\min_{\vx \geq \zeros}\max_{\vy \geq \zeros} \left\{- \innp{\ones, \vx} - \innp{\ones, \vy} + \innp{\mA\vx, \vy}\right\}. 
\end{equation}
While the saddle-point formulation (\ref{eq:saddle-point}) encompasses both the primal (packing) and the dual (covering) problems, in general it is non-smooth, limiting the applicability of first-order methods. To circumvent this issue, we can ``smoothen'' the dual problem by adding a strongly convex function $\phi_1(\vy)$ into the maximization problem (with a negative sign) and focus on solving the resulting primal problem. This is similar to the approach taken in \cite{nesterov2005smooth} for a more general class of functions, and in \cite{AO-lp-parallel} for packing and covering LP.  
Unlike previous work \cite{AO-lp-parallel,AwerbuchKhandekar2009,dc-young2001sequential}, which uses generalized entropy, our choice of the regularizer $\phi_1(\vy)$ will be:
\begin{equation}\label{eq:y-reg}
\phi_1(\vy) = -\innp{\ones, \vy} + \frac{1}{1+\alpha}\sum_{i=1}^m {y_i}^{1+\alpha}.
\end{equation}

Expressing $\phi_1^*(\mA\vx - \ones)$ in closed form, the smoothened saddle-point problem (\ref{eq:saddle-point}) reduces to the following minimization problem over the non-negative orthant:
\begin{equation}\label{eq:smoothened-packing}
\min_{\vx \geq \zeros} f_{\alpha}(\vx) \equiv \min_{\vx \geq \zeros} - \innp{\ones, \vx} + \frac{\alpha}{1+\alpha}\sum_{i=1 }^m \left(\mA\vx\right)^{\frac{1+\alpha}{\alpha}}.
\end{equation}

At an intuitive level, $\phi_1(\vy)$ essentially replaces the covering objective $-\innp{\ones, \vy}$ by $- \frac{1}{1+\alpha}\sum_{i=1}^m {y_i}^{1+\alpha}$ which enjoys better structural properties and closely approximates the original objective for sufficiently small $\alpha$. Namely, for our choice of $\alpha = \frac{\epsilon/4}{\log(nm\|\mA\|_{\infty}/\epsilon)}$ , in the region $\vy \in [0, 1]^n$ (which contains the optimal solution $\vy^*$), $- \frac{1}{1+\alpha}\sum_{i=1}^m {y_i}^{1+\alpha}$ multiplicatively $(1+O(\epsilon))$-approximates the original objective, and is, furthermore, $\alpha$-strongly concave. This implies that, for $\vy$ restricted to the hypercube $\vy \in [0, 1]^n$, the resulting minimization problem is $(1/\alpha)$-smooth. While we will not make such a restriction on $\vy$ in order to maintain a valid problem formulation following from (\ref{eq:saddle-point}), it turns out that we will be able to recover a convergence guarantee for the resulting minimization problem that matches that of unaccelerated methods (e.g., gradient descent or mirror-descent) for a $(1/\alpha)$-smooth function.

The following proposition formalizes this intuition.
%justifies replacing the standard packing problem by the minimization of $f_{\alpha}(\vx)$ over $\mathbb{R}^n_+$. 
In particular, it shows that solving (\ref{eq:smoothened-packing}) to multiplicative $(1+\epsilon)$ accuracy suffices to obtain a multiplicative $(1+O(\epsilon))$ solution to the original (non-smoothened) packing problem.

\begin{propositionrep}\label{prop:smoothened-problem}
Let $\opt$ be the optimal value of the packing problem, $\vx \geq \zeros$ be any non-negative vector, $\alpha \leq \frac{\epsilon/4}{\log(mn\|\mA\|_{\infty}/\epsilon)}$, and let $\vx^*_{\alpha}\geq 0$ be the minimizer of $f_{\alpha}(\cdot)$. Then:
\begin{enumerate}
\item \label{it:opt-bounds}(Bounds on $\opt$.) $\frac{1}{\|\mA\|_{\infty}}\leq\opt\leq n$.
\item \label{it:approx-barr} (Approximate barrier property.) If $\mA \vx \leq (1-\epsilon/2)\ones$, then $\frac{\alpha}{1+\alpha}\sum_{i=1 }^m \left(\mA\vx\right)^{\frac{1+\alpha}{\alpha}} \leq \frac{\epsilon}{2} \opt$. Conversely, if, for some $i$, $(\mA\vx)_i \geq 1+\epsilon/2$, then $\frac{\alpha}{1+\alpha}\sum_{i=1 }^m \left(\mA\vx\right)^{\frac{1+\alpha}{\alpha}} > 2\opt$.
\item \label{it:approx-obj} (Approximation guarantee.) If $f_{\alpha}(\vx) \leq (1-\epsilon)f_{\alpha}(\vx_{\alpha}^*)$, then $\mA\vx \leq (1+\epsilon/2)\ones$ and $\innp{\ones, \vx}\geq (1-3\epsilon/2)\opt$.
\end{enumerate}
\end{propositionrep}
\begin{proof}
The proof of Part \ref{it:opt-bounds} is immediate, by observing that $\vx = \frac{1}{n\|\mA\|_{\infty}}\ones$ is packing feasible, while having $\vx_j > 1$ for any coordinate $j$ violates the constraints, as $\min_{ij: A_{ij}>0}A_{ij} = 1$.

For Part \ref{it:approx-barr}, assume first that $\mA\vx\leq (1-\epsilon/2)\ones$. Then, $\forall i$, 
$$(\mA\vx)_i^{\frac{1+\alpha}{\alpha}}\leq (1-\epsilon/2)^{\frac{1}{\alpha}}\leq e^{-\log(M\|\mA\|^2_{\infty}/\epsilon)} < \frac{1}{m\|\mA\|^2_{\infty}}.$$
Thus, $\frac{\alpha}{1+\alpha}\sum_{i=1 }^m \left(\mA\vx\right)^{\frac{1+\alpha}{\alpha}} < \frac{\alpha}{\|\mA\|_{\infty}}\leq \frac{\epsilon}{4}\opt$. On the other hand, if, for some $i$, $(\mA\vx)_i \geq 1+\epsilon/2$, then:
$$
(\mA\vx)_i^{\frac{1+\alpha}{\alpha}}\geq (1+\epsilon/2)^{\frac{1}{\alpha}}\geq \left(\frac{mn\|\mA\|_{\infty}}{\epsilon}\right)^{7/4},
$$
and, therefore, $\frac{\alpha}{1+\alpha}(\mA\vx)_i \geq 2\opt$.

Let $\vx^*$ be an optimal solution to the packing problem.
The second part implies that given an $\vx$ such that $(\mA\vx)_i \geq 1+\epsilon/2$ for some $i$, it must be $f_{\alpha}(\vx)>0$. On the other hand, for $\vx = (1-\epsilon/2)\vx^*$, we have that $f_{\alpha}(\vx) \leq -(1-\epsilon/4)\opt$. Therefore, for the third part of the proposition, $f_{\alpha}(\vx) \leq (1-\epsilon)f_{\alpha}(\vx^*_{\alpha}) < 0$, and thus $\mA\vx \leq (1+\epsilon)\ones$. Further, $-\innp{\ones, \vx} \leq -(1-\epsilon/4)(1-\epsilon)\opt \leq -(1-3\epsilon/2)\opt$, as claimed.
\end{proof}
The proof can be found in the appendix. In the rest of the note, we focus on minimizing $f_{\alpha}(\cdot)$. For simplicity,	 denote $\vx^* = \arg\min_{\vx \geq \zeros} f_{\alpha}(\vx)$.

% \begin{remark}
% With minor modifications, it is possible to use $\alpha = \frac{\epsilon/4}{\log(N/\epsilon)}$, making $\alpha$ and, consequently, the iteration count completely width-independent. To use such a value of $\alpha$ it is only needed to use a slightly different scaling of the problem, as in \cite{AO-lp-parallel}, which only affects proofs of Propositions \ref{prop:smoothened-problem} and \ref{prop:small-enough-z-k}.
% \end{remark}

%%%%%%%%%%%%%%%%%%%%%%%%%%%%%%%%%%%%%%%%%%%%%%%%
\section{Algorithm and Convergence Analysis}
%\todo[inline]{Update the algorithm}
The pseudocode of the algorithm is provided in Algorithm \ref{algo:pc-lp} (\textsc{PackingCoveringLP}). All the algorithm steps and parameters will become clear from the analysis, and are only stated here for completeness. It is clear that the algorithm terminates after $K \leq \lceil\frac{\eta}{\gamma}\rceil = O(\frac{\log^2(mn\|\mA\|_{\infty}/\epsilon)}{\epsilon^2})$ iterations.  

\begin{algorithm}
\caption{\textsc{PackingCoveringLP}($\mA, \epsilon$)}
\label{algo:pc-lp}
\begin{algorithmic}[1]
\State Function: $\psi(\vx) = -\innp{\ones, \vx} + \frac{1}{1-\alpha}\sum_{j=1}^n {x_j}^{1-\alpha}$
\State Initialization: $\vx^{(0)} = \frac{1-\epsilon}{n\|\mA\|_{\infty}}$, $\vz^{(0)} = \nabla \psi(\vx^{(0)})$, $\vyb^{(0)}=0$, $k=0$, $A_0 = 1$
\State Parameters: $\alpha = \frac{\epsilon/4}{\log(mn\|\mA\|_{\infty}/\epsilon)}$, $\eta = \frac{1}{\epsilon}$, $\gamma = \frac{\alpha^2\eta}{4}$
\While{$A_k \leq \eta$}
\State $k = k+1$
\State $\vx^{(k)} = \nabla \psi^*(\vz^{(k-1)})$ (i.e., $x_j^{(k)}=(1 + z_j^{(k-1)}/\eta)^{-\frac{1}{\alpha}}, \; \forall j \in \{1,2,...,n\}$)
\State $\vz^{(k)} = \vz^{(k-1)} + \gamma T_{\nabla f(\vx^{(k)})}$
\State $\overline{y}_i^{(k)} = \overline{y}_i^{(k-1)} + (\mA\vx^{(k)})_i^{1/\alpha}$, $\forall i \in \{1, 2,..., m\}$
\State $A_k = A_{k-1} + \gamma$
\EndWhile
\State $\vyb^{(k)} = \vyb^{(k)}/k$
\State\Return $\vx^{(k)}, \vyb^{(k)}$
\end{algorithmic}
\end{algorithm}

At a high level, the analysis follows the general argument of constructing an approximate optimality gap $G_k$ as the difference of an upper bound $U_k$ and a lower bound $L_k$, and showing that $A_k G_k$ is a non-increasing function of iterations $k$ for some increasing sequence $A_k$, as in the general approximate gap framework \cite{thegaptechnique}. However, as $f_{\alpha}(\cdot)$ does not directly fall into any of the standard broad classes of objectives with globally-well-behaved properties (e.g., smooth or Lipschitz continuous)\footnote{In fact, we could make $f_{\alpha}(\cdot)$ be both smooth and Lipschitz continuous by bounding the approximate packing barrier $(\mA\vx)^{\frac{1}{\alpha}}$ by some large enough number. However, this would generally lead to at least linear in $n$ number of iterations.}, we need to resort to a more fine-grained analysis relying on rather local properties of $f_{\alpha}(\cdot)$.

\subsection{Local Smoothness and the Upper Bound}

We start by describing the smoothness properties of $f_{\alpha}(\cdot)$, which will be crucially used in the convergence analysis and will essentially determine the step size. Here, ``smoothness'' is not attained in the classical sense, i.e., we do not have $f_{\alpha}(\vy)\leq f_{\alpha}(\vx) + \innp{\nabla f_{\alpha}(\vx), \vy - \vx} + \frac{L}{2}\|\vy - \vx\|^2$ for some $L \in \mathbb{R}_{++}$. Instead, we will show that $f_{\alpha}(\cdot)$ exhibits a property similar to smoothness in a local sense: under small enough multiplicative updates, second (and higher) order terms in the Taylor approximation of $f_{\alpha}(\cdot)$ are not ``large'' compared to the first-order term. This is formalized in the following lemma. Observe that, due to the different choice of a regularizer, unlike \cite{AO-lp-parallel}, we do not need to require (near-)feasibility of $\vx$ in the packing polytope for this smoothness property of $f_{\alpha}(\cdot)$ to hold.

\begin{lemma}\label{lemma:grad-step}(Local multiplicative smoothness.) 
Let $\vx \geq \zeros$. If $\mB = \diag(\vbeta)$, for vector $\vbeta$ given as: $\beta_j = -c_j\alpha T_{\nabla_j f(\vx)}$, where $c_j \in [0, 1/2)$, $\alpha < 1$, and $T_{\nabla_j f(\vx)}$ is given by (\ref{eq:trunc-grad}), then:
$$
f_{\alpha}(\vx + \mB\vx) - f_{\alpha}(\vx) \leq -\alpha \sum_{j=1}^n c_j(1-2c_j)\nabla_j f_{\alpha}(\vx) T_{\nabla_jf(\vx)} x_j.
$$
\end{lemma}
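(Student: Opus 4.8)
The plan is to expand $f_{\alpha}(\vx + \mB\vx) - f_{\alpha}(\vx)$ term by term using the explicit form $f_{\alpha}(\vx) = -\innp{\ones, \vx} + \frac{\alpha}{1+\alpha}\sum_{i=1}^m (\mA\vx)_i^{\frac{1+\alpha}{\alpha}}$, and bound each piece by something proportional to the first-order term $\nabla_j f_{\alpha}(\vx) T_{\nabla_j f(\vx)} x_j$. First, I would record the gradient: $\nabla_j f_{\alpha}(\vx) = -1 + \sum_i A_{ij}(\mA\vx)_i^{1/\alpha}$, and note that the multiplicative update $\vx \mapsto \vx + \mB\vx$ sends $x_j \mapsto (1+\beta_j)x_j$ with $\beta_j = -c_j \alpha T_{\nabla_j f(\vx)} \in (-\alpha/2, \alpha/2)$, hence $1+\beta_j \in (1/2, 3/2)$ and in particular stays positive, so the new point is still in the domain. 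Correspondingly $(\mA\vx)_i \mapsto \sum_j A_{ij}(1+\beta_j)x_j = (\mA\vx)_i + (\mA\mB\vx)_i$, a nonnegative-to-nonnegative map with multiplicative factor in the same range $(1/2, 3/2)$ coordinatewise.

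The linear part contributes $-\innp{\ones, \mB\vx} = \alpha \sum_j c_j T_{\nabla_j f(\vx)} x_j$ exactly. For the nonlinear part, I would analyze $g(t) = \frac{\alpha}{1+\alpha}\sum_i \big((\mA\vx)_i + t(\mA\mB\vx)_i\big)^{\frac{1+\alpha}{\alpha}}$ on $t \in [0,1]$. Its derivative at $t=0$ gives the first-order contribution $\sum_i (\mA\vx)_i^{1/\alpha}(\mA\mB\vx)_i = \sum_j (\mB\vx)_j \sum_i A_{ij}(\mA\vx)_i^{1/\alpha} = \sum_j \beta_j x_j (\nabla_j f_{\alpha}(\vx) + 1)$. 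Adding the linear part, the first-order total telescopes nicely to $\sum_j \beta_j x_j \nabla_j f_{\alpha}(\vx) = -\alpha \sum_j c_j \nabla_j f_{\alpha}(\vx) T_{\nabla_j f(\vx)} x_j$, which is exactly the leading term on the right-hand side (with coefficient $-\alpha c_j$ rather than $-\alpha c_j(1-2c_j)$). So everything reduces to showing the second-and-higher-order remainder of $g$ is at most $2\alpha \sum_j c_j^2 \nabla_j f_{\alpha}(\vx) T_{\nabla_j f(\vx)} x_j$; note this remainder term must be controlled by the \emph{same} (nonnegative) quantity, so I also need to check signs: the relevant product $\nabla_j f_{\alpha}(\vx) T_{\nabla_j f(\vx)} x_j$ is always $\geq 0$ because $T_{\nabla_j f(\vx)}$ has the same sign as $\nabla_j f_{\alpha}(\vx)$ when the latter lies in $[-1,1]$, and equals $1 > 0$ when $\nabla_j f_{\alpha}(\vx) > 1$.

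The main obstacle is bounding the remainder $g(1) - g(0) - g'(0)$. The clean route is convexity/monotonicity of $g'$: since $\frac{1+\alpha}{\alpha} > 1$, each summand $(\cdot)^{\frac{1+\alpha}{\alpha}}$ is convex, so $g$ is convex and $g(1) - g(0) \leq g'(1)$, giving remainder $\leq g'(1) - g'(0) = \frac{1+\alpha}{\alpha}\cdot\frac{\alpha}{1+\alpha}\sum_i \big[ ((\mA\vx)_i + (\mA\mB\vx)_i)^{1/\alpha} - (\mA\vx)_i^{1/\alpha}\big](\mA\mB\vx)_i$. Now I would use the multiplicative bound: $(\mA\vx)_i + (\mA\mB\vx)_i = (1+\theta_i)(\mA\vx)_i$ for some $\theta_i$ with $|\theta_i| < \alpha/2$ (a convex combination of the $\beta_j$ restricted to the support of row $i$), so $((1+\theta_i)^{1/\alpha} - 1)(\mA\vx)_i^{1/\alpha} \leq (e^{1/2}-1)(\mA\vx)_i^{1/\alpha} \le 2(\mA\vx)_i^{1/\alpha}$ using $\theta_i/\alpha < 1/2$ and $e^{1/2} - 1 < 1 < 2$ (in fact a tighter constant is available, and I would pick whatever makes the $(1-2c_j)$ bookkeeping work). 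Substituting back and re-expanding $\sum_i (\mA\vx)_i^{1/\alpha}(\mA\mB\vx)_i = \sum_j \beta_j x_j(\nabla_j f_{\alpha}(\vx) + 1)$ with $\beta_j = -c_j\alpha T_{\nabla_j f(\vx)}$, the remainder is at most $2\alpha\sum_j c_j \big|T_{\nabla_j f(\vx)}\big| x_j (\nabla_j f_{\alpha}(\vx)+1)\cdot(\text{a factor }\le \alpha/2\text{ per }j)$ — wait, I must be careful to keep one factor $c_j$ and extract a second factor $c_j$ from the size of $\beta_j$; doing this honestly and using $|T_{\nabla_j f(\vx)}| \le 1$ together with $T_{\nabla_j f(\vx)}(\nabla_j f_{\alpha}(\vx)+1) \le T_{\nabla_j f(\vx)}\nabla_j f_{\alpha}(\vx) + 1$ and a comparison showing $T_{\nabla_j f(\vx)} \le 2 T_{\nabla_j f(\vx)}\nabla_j f_{\alpha}(\vx) + (\text{small})$ is the delicate accounting step. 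The cleanest version: show directly that for each $j$, $|T_{\nabla_j f(\vx)}|(\nabla_j f_\alpha(\vx)+1) \le 2\, T_{\nabla_j f(\vx)}\nabla_j f_\alpha(\vx)$ whenever $\nabla_j f_\alpha(\vx) \ge $ the relevant threshold, and handle the small-gradient regime separately where the whole nonlinear remainder is negligible against the linear term; then collecting gives remainder $\le 2\alpha\sum_j c_j^2 \nabla_j f_\alpha(\vx) T_{\nabla_j f(\vx)} x_j$, which combined with the first-order term yields the claimed $-\alpha\sum_j c_j(1-2c_j)\nabla_j f_\alpha(\vx)T_{\nabla_j f(\vx)} x_j$. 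The subtlety to watch throughout is that $c_j$ may be $0$ (then that coordinate is untouched and contributes nothing on both sides) and that the bound must be term-by-term in $j$ so that no cross-terms between coordinates are lost.
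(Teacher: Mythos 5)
Your high-level plan (Taylor-expand around $\vx$, exploit that the step is multiplicatively small so higher-order terms behave) is the same one the paper uses, but two of your concrete steps do not go through as written, and both happen to be exactly where the paper's proof does something different.

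First, the convexity shortcut loses a fatal constant. You bound the second-order remainder of $g(t)=\frac{\alpha}{1+\alpha}\sum_i\big((\mA\vx)_i+t(\mA\mB\vx)_i\big)^{\frac{1+\alpha}{\alpha}}$ by $g(1)-g(0)-g'(0)\le g'(1)-g'(0)$. That is correct by convexity but discards the factor $\tfrac12$ that the integral (or Lagrange) Taylor remainder retains: $g(1)-g(0)-g'(0)=\int_0^1(1-t)\,g''(t)\,dt\le\tfrac12\sup_t g''(t)$. The paper keeps this $\tfrac12$ and pairs it with $\nabla^2 f_\alpha(\vx+t\mB\vx)\preceq(1+\beta_{\max})^{\frac1\alpha-1}\nabla^2 f_\alpha(\vx)\preceq 2\,\nabla^2 f_\alpha(\vx)$, so the $\tfrac12$ and the $2$ cancel and the Hessian term lands cleanly at $\frac1\alpha\langle\ones+\nabla f_\alpha(\vx),\mB^2\vx\rangle$. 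Your version produces roughly twice that (a factor $(1+\alpha/2)^{1/\alpha-1}\approx e^{1/2}$ instead of $1$). Chase that extra factor through to the per-coordinate check you eventually need, which is $T^2(g+1)\le 2\,T g$ with $T=T_{\nabla_j f(\vx)}$ and $g=\nabla_j f_\alpha(\vx)$: this inequality holds with equality at $g=1$, so there is no slack, and $e^{1/2}\,T^2(g+1)\le 2\,Tg$ already fails for, say, $g\in(0.3,1]$. Your own recognition of needing to ``handle the small-gradient regime separately'' points at precisely this breakage, but there is no separate argument that rescues it, because the nonlinear remainder is not negligible there.

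Second, the ``substituting back'' step is wrong on signs, and more fundamentally skips the step that decouples coordinates. After writing $(\mA\vx)_i+(\mA\mB\vx)_i=(1+\theta_i)(\mA\vx)_i$ you bound $(1+\theta_i)^{1/\alpha}-1\le 2$ and substitute into $\sum_i\big[(1+\theta_i)^{1/\alpha}-1\big](\mA\vx)_i^{1/\alpha}(\mA\mB\vx)_i$. But $(\mA\mB\vx)_i=\theta_i(\mA\vx)_i$ and $(1+\theta_i)^{1/\alpha}-1$ have the \emph{same} sign, so their product is always $\ge 0$; replacing $(1+\theta_i)^{1/\alpha}-1$ by a constant upper bound reverses the inequality on every $i$ with $\theta_i<0$. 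Beyond the sign issue, this move is trying to compare a second-order quantity to a first-order one; what it actually yields after a correct bound on $(1+\theta_i)^{1/\alpha}-1$ is something of the form $\frac{O(1)}{\alpha}\sum_i(\mA\vx)_i^{1/\alpha-1}(\mA\mB\vx)_i^2$, which still has cross-terms $\beta_j\beta_{j'}$ hidden inside $(\mA\mB\vx)_i^2$. The paper's crucial intermediate step is Cauchy--Schwarz with the weight $A_{ij}x_j$: $(\mA\mB\vx)_i^2=\big(\sum_j A_{ij}\beta_j x_j\big)^2\le(\mA\vx)_i\sum_j A_{ij}\beta_j^2 x_j$, which turns the Hessian contribution into the coordinate-separable $\frac1\alpha\sum_j\beta_j^2 x_j\big(\nabla_j f_\alpha(\vx)+1\big)$ and is precisely what produces the $c_j^2$ factor. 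Without that decoupling you cannot land on the per-$j$ inequality $c_j(\nabla_j f_\alpha(\vx)+1)T_{\nabla_j f(\vx)}^2\le 2c_j\,\nabla_j f_\alpha(\vx)T_{\nabla_j f(\vx)}$, which is where the $(1-2c_j)$ in the claim comes from. So: use the Taylor remainder with the $\tfrac12$, apply the multiplicative Hessian bound, and then apply Cauchy--Schwarz; then your final coordinatewise check does hold and the lemma follows.
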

\begin{proof}
From the Taylor approximation of $f_{\alpha}(\vx + \mB\vx)$:
\begin{align}\label{eq:taylor-approx}
f_{\alpha}(\vx + \mB\vx) \leq f_{\alpha}(\vx) + \innp{\nabla f_{\alpha}(\vx), \mB\vx} + \frac{1}{2}\innp{\nabla^2f_{\alpha}(\vx + t\mB\vx)\mB\vx, \mB\vx},
\end{align}
for some $t\in [0, 1]$. The gradient and the Hessian of $f_{\alpha}(\vx)$ are given by:
\begin{align}\label{eq:grad+hessian}
\nabla_j f_{\alpha}(\vx) = -1 + \sum_{i=1}^m A_{ij}(\mA\vx)_i^{\frac{1}{\alpha}}, \quad \nabla^2_{jk} f_{\alpha}(\vx) =  \frac{1}{\alpha}\sum_{i=1}^m A_{ij}A_{ik} (\mA\vx)_i^{\frac{1}{\alpha}-1}.
\end{align}
Let $\beta_m = \max_j \beta_j$. Then:
%\begin{equation}
$
\nabla^2 f_{\alpha}(\vx + t\mB\vx)\preceq (1+\beta_m)^{\frac{1}{\alpha}-1} \nabla^2 f_{\alpha}(\vx) \preceq 2 \nabla^2 f_{\alpha}(\vx),
$ 
%\end{equation}
as $(1+\beta_m)^{\frac{1}{\alpha}-1} \leq 2$ is equivalent to $\beta_m \leq 2^{\frac{\alpha}{1-\alpha}}-1 \leq \frac{\alpha}{1-\alpha}$, which is true by the lemma assumptions. Therefore:
\begin{align}
\frac{1}{2}\innp{\nabla^2f_{\alpha}(\vx + t\mB\vx)\mB\vx, \mB\vx} &\leq \frac{1}{\alpha} \sum_{i=1}^m (\mA\vx)_i^{\frac{1}{\alpha}-1} (\mA\mB\vx)_i^{2}\notag\\
\text{(by Cauchy-Schwartz Ineq.) } &\leq \frac{1}{\alpha} \sum_{i=1}^m (\mA\vx)_i^{\frac{1}{\alpha}}\sum_{j=1}^n A_{ij}\beta_j^2 x_j 
= \frac{1}{\alpha} \sum_{j=1}^n \left(\sum_{i=1}^m A_{ij} (\mA\vx)_i^{\frac{1}{\alpha}}\right) \beta_j^2 x_j\notag\\
\text{(by (\ref{eq:grad+hessian})) } &= \frac{1}{\alpha}\innp{\ones + \nabla f_{\alpha}(\vx), \mB^2 \vx}. \label{eq:hessian-term-bound}
\end{align}
As $\beta_j = -c_j\alpha T_{\nabla_jf(\vx)}$, it follows that $c_j\alpha \frac{|\nabla_j f_{\alpha}(\vx)|}{\max\{1, 1+ \nabla_jf_{\alpha}(\vx)\}}\leq |\beta_j| \leq 2c_j\alpha \frac{|\nabla_j f_{\alpha}(\vx)|}{\max\{1, 1+ \nabla_jf_{\alpha}(\vx)\}}$. Combining (\ref{eq:taylor-approx}) and (\ref{eq:hessian-term-bound}):
\begin{align}
f_{\alpha}(\vx + \mB\vx) - f_{\alpha}(\vx) &\leq \innp{\nabla f_{\alpha}(\vx), \mB\vx} + \frac{1}{\alpha}\innp{\ones + \nabla f_{\alpha}(\vx), \mB^2 \vx}\notag\\
&\leq %(1-2c)\innp{\nabla f_{\alpha}(\vx), \mB\vx}%\notag\\
%= 
-\alpha \sum_{j=1}^n c_j(1-2c_j)\nabla_j f_{\alpha}(\vx) T_{\nabla_jf(\vx)} x_j,\notag
\end{align}
as claimed.
\end{proof}

Since we are focusing on minimizing $f_{\alpha}(\cdot)$, $U_k = f_{\alpha}(\vx^{(k+1)})$ for $\vx^{(k+1)}$ being the solution constructed by the algorithm at the end of iteration $k$ is a valid  upper bound, as long as $\vx^{(k+1)}\geq \zeros$. Lemma \ref{lemma:grad-step} will be used to show that the algorithm steps lead to a sufficiently large decrease in the upper bounds between subsequent iterations.

\subsection{Lower Bound and the Algorithm Steps}

To assess the quality of approximation for a given point $\vx^{(k)}\geq \zeros$, we need a notion of a lower bound to $f_{\alpha}(\vx^*)$. The following lemma constructs one such lower bound.

\begin{lemma}\label{lemma:lower-bound}
Let $\vx^{(0)}, \vx^{(1)},..., \vx^{(k)}$ be a sequence of  points from $\mathbb{R}_+^n$, $a_0, a_1, ..., a_k$ be positive numbers, $A_k = \sum_{s=0}^k a_s$, and let $\phi: \mathbb{R}^n \rightarrow \mathbb{R}$ be a concave function. Then:
$$
f_{\alpha}(\vx^*) \geq L_k \defeq \frac{\sum_{s=0}^k a_s (f_{\alpha}(\vx^{(s)}) - \innp{\nabla f_{\alpha}(\vx^{(s)}), \vx^{(s)}}) + \min_{\vx\geq \zeros}\left\{ \sum_{s=0}^k a_s\innp{T_{\nabla f(\vx^{(s)})}, \vx} - \phi(\vx) \right\} + \phi(\vx^*)}{A_k}.
$$
In particular, if $\phi(\vx) = \psi(\vx) - \innp{\nabla \psi(\vx^{(0)}) - a_0 T_{\nabla f(\vx^{(0)})}, \vx}$ for some continuously-differentiable concave function $\psi:\mathbb{R}^n \rightarrow \mathbb{R}$ and we define $\vz^{(0)} = \nabla \psi(\vx^{(0)})$, $\vz^{(k)} = \vz^{(k-1)} + a_k T_{\nabla f(\vx^{(k)})}$ for $k \geq 1$, then:
$$
L_k = \frac{\sum_{s=0}^k a_s (f_{\alpha}(\vx^{(s)}) - \innp{\nabla f_{\alpha}(\vx^{(s)}), \vx^{(s)}}) +\psi^*(\vz^{(k)}) + \phi(\vx^*)}{A_k}
$$
and
$$
\nabla \psi^*(\vz^{(k)}) = \arg\min_{\vx \geq \zeros}\left\{ \sum_{s=0}^k a_s \innp{T_{\nabla f(\vx^{(s)})}, \vx}-\phi(\vx)\right\}.
$$
\end{lemma}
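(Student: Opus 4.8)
The plan is to establish the lower bound $L_k$ by the standard ``linearize-and-sum'' argument used in dual averaging / approximate gap analyses, and then to specialize the regularizer to obtain the closed form via convex conjugacy. First I would exploit two elementary facts about $f_\alpha$: (i) by convexity of $f_\alpha$ (which holds since its Hessian in \eqref{eq:grad+hessian} is positive semidefinite), for every $s$ and every $\vx \geq \zeros$ we have $f_\alpha(\vx) \geq f_\alpha(\vx^{(s)}) + \innp{\nabla f_\alpha(\vx^{(s)}), \vx - \vx^{(s)}}$; and (ii) the truncated gradient satisfies $\innp{T_{\nabla f(\vx^{(s)})}, \vx} \leq \innp{\nabla f_\alpha(\vx^{(s)}), \vx}$ for all $\vx \geq \zeros$, because truncation only ever \emph{decreases} a coordinate of the gradient (it replaces $\nabla_j f_\alpha$ by $1$ only when $\nabla_j f_\alpha > 1$), and $\vx$ is coordinatewise nonnegative. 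Chaining (i) and (ii) gives, for each $s$,
$$
f_\alpha(\vx) \;\geq\; \bigl(f_\alpha(\vx^{(s)}) - \innp{\nabla f_\alpha(\vx^{(s)}), \vx^{(s)}}\bigr) + \innp{T_{\nabla f(\vx^{(s)})}, \vx}.
$$

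Next I would take the convex combination of these inequalities with weights $a_s / A_k$ and evaluate at $\vx = \vx^*$. Writing $c_s := f_\alpha(\vx^{(s)}) - \innp{\nabla f_\alpha(\vx^{(s)}), \vx^{(s)}}$, this yields
$$
A_k f_\alpha(\vx^*) \;\geq\; \sum_{s=0}^k a_s c_s + \Bigl\langle \textstyle\sum_{s=0}^k a_s T_{\nabla f(\vx^{(s)})},\, \vx^* \Bigr\rangle.
$$
To turn the linear term into something computable I would add and subtract the concave function $\phi$: since $\phi(\vx^*)$ is a fixed quantity,
$$
\Bigl\langle \textstyle\sum_s a_s T_{\nabla f(\vx^{(s)})},\, \vx^* \Bigr\rangle \;=\; \Bigl(\bigl\langle \textstyle\sum_s a_s T_{\nabla f(\vx^{(s)})},\, \vx^* \bigr\rangle - \phi(\vx^*)\Bigr) + \phi(\vx^*) \;\geq\; \min_{\vx \geq \zeros}\Bigl\{ \bigl\langle \textstyle\sum_s a_s T_{\nabla f(\vx^{(s)})},\, \vx \bigr\rangle - \phi(\vx)\Bigr\} + \phi(\vx^*).
$$
Dividing by $A_k$ gives exactly the claimed form of $L_k$; note only concavity of $\phi$ (not of $\psi$) is used here, and the minimum is over the orthact as stated.

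For the ``in particular'' part I would substitute $\phi(\vx) = \psi(\vx) - \innp{\nabla\psi(\vx^{(0)}) - a_0 T_{\nabla f(\vx^{(0)})},\, \vx}$. Since $\vz^{(k)} = \nabla\psi(\vx^{(0)}) + \sum_{s=1}^k a_s T_{\nabla f(\vx^{(s)})}$, one checks that $\sum_{s=0}^k a_s T_{\nabla f(\vx^{(s)})} - \bigl(\nabla\psi(\vx^{(0)}) - a_0 T_{\nabla f(\vx^{(0)})}\bigr) = \vz^{(k)}$, so
$$
\min_{\vx\geq\zeros}\Bigl\{ \textstyle\sum_s a_s\innp{T_{\nabla f(\vx^{(s)})},\vx} - \phi(\vx)\Bigr\} \;=\; \min_{\vx\geq\zeros}\bigl\{\innp{\vz^{(k)}, \vx} - \psi(\vx)\bigr\} \;=\; \psi^*(\vz^{(k)}),
$$
using the concave-conjugate definition (Definition \ref{def:cvx-cncv-conj}); plugging this in gives the stated closed form for $L_k$. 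Finally, the $\arg\min$ identity $\nabla\psi^*(\vz^{(k)}) = \arg\min_{\vx\geq\zeros}\{\innp{\vz^{(k)},\vx} - \psi(\vx)\}$ is precisely the concave case of Fact \ref{fact:danskin} (Danskin's theorem), applied with the closed convex set $X = \mathbb{R}^n_+$ and the concave function $\psi$ — I should double-check that the explicit formula $x_j = (1 + z_j/\eta)^{-1/\alpha}$ in the algorithm is indeed this minimizer for $\psi(\vx) = -\innp{\ones,\vx} + \frac{1}{1-\alpha}\sum_j x_j^{1-\alpha}$ (up to the scaling $a_0 = \eta$ folded into $\psi$), which amounts to a one-line first-order stationarity check.

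The only genuinely delicate point — and the one I would be most careful about — is the inequality $\innp{T_{\nabla f(\vx^{(s)})}, \vx} \leq \innp{\nabla f_\alpha(\vx^{(s)}), \vx}$: it relies on the sign structure of the truncation in \eqref{eq:trunc-grad} (only overestimates above $1$ are clipped down, never the other way) together with $\vx \geq \zeros$, so it holds coordinatewise. Everything else is bookkeeping: tracking that the $a_0$-term is absorbed into $\phi$ rather than appearing in the $\vz^{(k)}$ recursion, and that the conjugate is the \emph{concave} conjugate (an $\inf$, matching the $\min$ over $\vx \geq \zeros$). I would also remark that $L_k$ is a valid lower bound on $f_\alpha(\vx^*) = \min_{\vx\geq\zeros} f_\alpha(\vx)$ regardless of the choice of $a_s$ and $\phi$, which is what makes it usable inside the approximate-gap framework together with the upper bound from Section~2.1.
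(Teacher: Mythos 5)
Your proof is correct and follows essentially the same route as the paper's: linearize $f_\alpha$ by convexity, replace $\nabla f_\alpha$ by its truncation $T_{\nabla f}$ using $\vx \geq \zeros$, subtract the concave regularizer $\phi$ and pass to the minimum, then specialize $\phi$ so the linear terms telescope into $\vz^{(k)}$ and invoke the concave-conjugate definition together with Danskin's theorem. The only (immaterial) slip is the parenthetical about Proposition~\ref{prop:regularizer-and-its-arg}: in the algorithm $a_0 = 1$, and the factor $\eta$ is built directly into $\psi$, not into $a_0$; this does not affect the argument for the lemma itself.
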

\begin{proof}
The construction of the claimed lower bound is similar to the general approach from \cite{thegaptechnique}, where in addition we use gradient truncation to account for non-standard smoothness properties of $f_{\alpha}(\cdot)$ (see Lemma \ref{lemma:grad-step}). In particular, by convexity of $f_{\alpha}(\cdot)$, $\forall \vu \geq \zeros, \forall s$: $f_{\alpha}(\vu) \geq f_{\alpha}(\vx^{(s)}) + \innp{\nabla f_{\alpha}(\vx^{(s)}), \vu - \vx^{(s)}}$, and, therefore:
\begin{align}
f_{\alpha}(\vu) \geq \frac{\sum_{s=0}^k a_s (f_{\alpha}(\vx^{(s)}) + \innp{\nabla f_{\alpha}(\vx^{(s)}), \vu - \vx^{(s)})}}{A_k}.\label{eq:trivial-lb}
\end{align}
Recall that $T_{\nabla_j f(\vx)}\leq f_{\alpha}(\vx)$. As $\vu \geq \zeros$, it follows that $\innp{\nabla f_{\alpha}(\vx^{(s)}), \vu} \geq \innp{T_{\nabla f(\vx^{(s)})}, \vu}$. Therefore, subtracting $\frac{1}{A_k}\phi(\vu)$ from both sides of (\ref{eq:trivial-lb}) and taking a minimum over $\vu \geq \zeros$ on the right-hand side of it, we have:
\begin{align}
f_{\alpha}(\vu) - \frac{1}{A_k}\phi(\vu) \geq \frac{\sum_{s=0}^k a_s (f_{\alpha}(\vx^{(s)})-\innp{\nabla f_{\alpha}(\vx^{(s)}), \vx^{(s)}}) + \min_{\vx \geq \zeros}\left\{ \sum_{s=0}^k a_s \innp{T_{\nabla f(\vx^{(s)})}, \vx} - \phi(\vx) \right\}}{A_k}. \notag
\end{align}
Taking $\vu = \vx^*$ in the last inequality yields the claimed lower bound on $f_{\alpha}(\vx^*)$.

For the second part of the lemma, we only need to show that
\begin{equation}\label{eq:psi*-from-lb}
\psi^*(\vz^{(k)}) = \min_{\vx\geq \zeros}\left\{ \sum_{s=0}^k a_s\innp{T_{\nabla f(\vx^{(s)})}, \vx} - \phi(\vx) \right\},
\end{equation}
while the rest of the proof follows from the definition of $L_k$ and by Fact \ref{fact:danskin}. Plugging $\phi(\vx)$ into (\ref{eq:psi*-from-lb}):
\begin{align*}
\min_{\vx\geq \zeros}\left\{ \sum_{s=0}^k a_s\innp{T_{\nabla f(\vx^{(s)})}, \vx} - \phi(\vx) \right\} &= \min_{\vx\geq \zeros}\left\{ \sum_{s=0}^k a_s\innp{T_{\nabla f(\vx^{(s)})}, \vx} - \psi(\vx) + \innp{\nabla \psi(\vx^{(0)})-T_{\nabla f(\vx^{(0)})}} \right\}\\
&= \min\left\{\innp{\vz^{(k)}, \vx} - \psi(\vx)\right\}, 
\end{align*}
which is, by definition, equal to $\psi^*(\vz^{(k)})$.
\end{proof}
We note that instead of $\phi(\vx) = \psi(\vx) - \innp{\nabla \psi(\vx^{(0)}) - a_0 T_{\nabla f(\vx^{(0)})}, \vx}$ we could have used $\phi(\vx) = D_{\psi}(\vx, \vx^{(0)}) + a_0 \innp{T_{\nabla f(\vx^{(0)})}, \vx}$, which is closer to the standard choice of $D_{\psi}(\vx, \vx^{(0)}) = \psi(\vx)-\psi(\vx^{(0)})-\innp{\nabla \psi(\vx^{(0)}), \vx - \vx^{(0)}}$ as a regularizer typically used in first-order methods. The only reason for omitting $-\psi(\vx^{(0)}) + \innp{\nabla \psi(\vx^{(0)}), \vx^{(0)}}$ from $\phi(\vx)$ is to directly work with $\psi^*(\vz^{(k)})$ in the lower bound and avoid bounding unnecessary terms in the gap. The addition of $a_0 \innp{T_{\nabla f(\vx^{(0)})}, \vx}$ is crucial in making the initial gap sufficiently small and it ensures $\nabla \psi^*(\vz^{(0)}) = \vx^{(0)}$. This is specific to the analysis presented here, i.e., such a term does not normally appear in the analysis of standard first order methods (see, e.g., \cite{thegaptechnique}).

Observe that the minimum in the lower bound $L_k$ generates a new point $\vx \geq \zeros$ as its argument. Similar to the standard dual averaging, we will define the sequence of points generated by the algorithm to be the arguments of those minima, that is, $\vx^{(k+1)} = \nabla \psi^*(\vz^{(k)})$. To do so, however, we will need an appropriate choice of $\psi(\cdot)$ that keeps the initial optimality gap sufficiently small and has properties that match well the smoothness of $f_{\alpha}$. The next proposition characterizes $\psi(\vx)$ and, consequently, the sequence of points generated by the algorithm.

\begin{proposition}\label{prop:regularizer-and-its-arg}
Let $\vz^{(0)} = \nabla \psi(\vx^{(0)})$, $\vz^{(k)} = \vz^{(k-1)} + a_k T_{\nabla f(\vx^{(k)})}$ for $k \geq 1$.  If $$\psi(\vx) = \eta\bigg(-\innp{\ones, \vx} + \sum_{j=1}^n \frac{{x_j}^{1-\alpha}}{1-\alpha}\bigg),$$ then $\psi^*(\vz^{(k)}) = - \frac{\eta \alpha}{1-\alpha}\sum_{j=1}^n (1 + z_j^{(k)}/\eta)^{-\frac{1-\alpha}{\alpha}}$, and $\nabla_j \psi^*(\vz^{(k)}) = (1 + z_j^{(k)}/\eta)^{-\frac{1}{\alpha}}$.
\end{proposition}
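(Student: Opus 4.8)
The statement amounts to computing the conjugate of a separable function, so the plan is to reduce it to $n$ one‑dimensional problems and then substitute back. First I would observe that $\psi$ is concave: its Hessian is diagonal with entries $-\eta\alpha x_j^{-\alpha-1}<0$ on $\mathbb{R}_{++}^n$. Hence, by Definition~\ref{def:cvx-cncv-conj}, the relevant conjugate is the concave one, $\psi^*(\vz)=\min_{\vx\geq\zeros}\{\innp{\vz,\vx}-\psi(\vx)\}$, and since $\psi$ is separable this decouples as $\psi^*(\vz)=\sum_{j=1}^n\min_{x\geq 0} g_j(x)$ with $g_j(x)=(z_j+\eta)x-\frac{\eta}{1-\alpha}x^{1-\alpha}$. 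The one point that genuinely needs care is that each $g_j$ is bounded below on $[0,\infty)$: since $1-\alpha<1$, the linear term dominates $x^{1-\alpha}$ as $x\to\infty$, so $\min_{x\geq 0}g_j$ is finite and attained in the interior exactly when $z_j+\eta>0$, i.e. $1+z_j/\eta>0$; otherwise the conjugate is $-\infty$ and the claimed formula is meaningless. I would therefore first verify $1+z_j^{(k)}/\eta>0$ along the algorithm's run: from $\vz^{(0)}=\nabla\psi(\vx^{(0)})$ one gets $1+z_j^{(0)}/\eta=(x_j^{(0)})^{-\alpha}>0$, and since the truncated gradient satisfies $T_{\nabla_j f(\cdot)}\geq -1$ (indeed $\nabla_j f_\alpha(\vx)=-1+\sum_i A_{ij}(\mA\vx)_i^{1/\alpha}\geq -1$, so truncation never pushes it below $-1$), we obtain $z_j^{(k)}\geq z_j^{(0)}-\sum_{s=1}^k a_s>z_j^{(0)}-\eta>-\eta$, using $\sum_{s=1}^k a_s=A_k-a_0<\eta$ from the termination condition.

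Granting $1+z_j/\eta>0$, the rest is a direct calculation. Setting $g_j'(x)=(z_j+\eta)-\eta x^{-\alpha}=0$ gives the unique critical point $x_j^{\star}=(1+z_j/\eta)^{-1/\alpha}\in(0,\infty)$, and $g_j''(x)=\eta\alpha x^{-\alpha-1}>0$ shows $g_j$ is strictly convex, so $x_j^{\star}$ is its minimizer. Substituting back and using $z_j+\eta=\eta(1+z_j/\eta)$ together with $1-\tfrac{1}{\alpha}=-\tfrac{1-\alpha}{\alpha}$, I get
\[
g_j(x_j^{\star}) \;=\; \eta\,(1+z_j/\eta)^{-\frac{1-\alpha}{\alpha}}\;-\;\frac{\eta}{1-\alpha}\,(1+z_j/\eta)^{-\frac{1-\alpha}{\alpha}}\;=\;-\frac{\eta\alpha}{1-\alpha}\,(1+z_j/\eta)^{-\frac{1-\alpha}{\alpha}},
\]
and summing over $j$ yields the stated closed form for $\psi^*(\vz^{(k)})$.

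For the gradient, $\nabla_j\psi^*(\vz^{(k)})$ is the $j$-th coordinate of the minimizer of $\innp{\vz^{(k)},\vx}-\psi(\vx)$ over $\vx\geq\zeros$ by Fact~\ref{fact:danskin}; since that minimizer is $x_j^{\star}=(1+z_j^{(k)}/\eta)^{-1/\alpha}$ and lies in the interior of $\mathbb{R}_+^n$ (so the non-negativity constraint is inactive), this is exactly the claim. Alternatively, one differentiates the closed form: $\frac{\partial}{\partial z_j}\bigl[-\frac{\eta\alpha}{1-\alpha}(1+z_j/\eta)^{-\frac{1-\alpha}{\alpha}}\bigr]=(1+z_j/\eta)^{-\frac{1-\alpha}{\alpha}-1}=(1+z_j/\eta)^{-1/\alpha}$, using $-\frac{1-\alpha}{\alpha}-1=-\frac{1}{\alpha}$. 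Beyond the positivity check flagged above, there is no real obstacle here: the whole argument is bookkeeping with fractional exponents.
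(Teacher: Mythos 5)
Your proof is correct and follows the same route as the paper's own one-line argument: set the derivative of $\innp{\vz^{(k)},\vx}-\psi(\vx)$ to zero coordinate-by-coordinate, solve for $\vx$, and substitute back. The extra care you take to verify $1+z_j^{(k)}/\eta>0$ along the algorithm's run (using $T_{\nabla_j f}\geq -1$ and $A_k-a_0<\eta$) is a worthwhile observation that the paper leaves implicit, since without it the fractional powers in the stated formula are not well defined.
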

\begin{proof}
Follows directly by setting the derivative of $ \innp{\vz^{(k)}, \vx}-\phi(\vx)$ with respect to $\vx$ equal to zero, and solving for $\vx$.
\end{proof}

% For $\phi(\cdot)$, $\psi(\cdot)$, and $\vz^{(k)}$ specified in Proposition \ref{prop:regularizer-and-its-arg}, the algorithm is simply defined as:
% \begin{equation}\label{eq:algo-step}
% \vx^{(k+1)} = \nabla \psi^*(\vz^{(k)}).
% \end{equation}
\subsection{The Gap Decrease and Convergence}

By constructing the upper bound and the lower bound, we have fully specified the algorithm and the approximate optimality gap, modulo specifying some of the parameters. In particular, from previous two subsections, using Proposition \ref{prop:regularizer-and-its-arg}, the approximate optimality gap $G_k = U_k - L_k$ is given as:
\begin{align}\label{eq:the-gap}
G_k = \frac{f_{\alpha}(\vx^{(k+1)}) - \sum_{s=0}^k a_s (f_{\alpha}(\vx^{(s)})-\innp{\nabla f_{\alpha}(\vx^{(s)}), \vx^{(s)}}) - \psi^*(\vz^{(k)}) - \phi(\vx^*)}{A_k}.
\end{align}

The overview of the rest of the convergence analysis is as follows. The main goal is to show that $A_k G_k -A_{k-1}G_{k-1 }\leq 0$ (Lemma \ref{lemma:main-lemma}) for some sufficiently fast growing $A_k$, so that $G_k \leq \frac{A_0}{A_k}G_0$. Recall that $f_{\alpha}(\vx^{(k+1)})-f_{\alpha}(\vx^*) \leq U_k - L_k = G_k.$ Thus, we also need to show that $G_0$ is sufficiently small compared to $f_{\alpha}(\vx^*)$. In particular, if $G_0 = O(1)f_{\alpha}(\vx^*)$ (Lemma \ref{lemma:initial-gap}), we immediately get that once $A_k \geq \frac{1}{\epsilon}A_0$, we have obtained a $(1-O(\epsilon))$-approximate solution, i.e., $f_{\alpha}(\vx^{(k)}) \leq (1-O(\epsilon))f_{\alpha}(\vx^*)$. We start by bounding the initial gap in Lemma \ref{lemma:initial-gap}, and then the rest of the convergence analysis will consist of proving the Main Lemma (Lemma \ref{lemma:main-lemma}).

\begin{lemma}\label{lemma:initial-gap} (Initial gap.) 
Let $\eta = \frac{1}{\epsilon}$, $a_0 = 1$, $\vx^{(0)} = \frac{1-\epsilon}{n\|\mA\|_{\infty}}\ones$. Let $\opt = \innp{\ones, \vx^*}$, where $\vx^*$ is the minimizer of $f_{\alpha}(\cdot)$. Then: 
$
G_0 \leq 2 \opt.
$
\end{lemma}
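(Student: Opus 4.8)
The plan is to plug the specified initialization into the gap formula (\ref{eq:the-gap}) specialized to $k=0$ and bound each term. At $k=0$ we have $A_0 = a_0 = 1$, so
$$
G_0 = f_{\alpha}(\vx^{(1)}) - \big(f_{\alpha}(\vx^{(0)}) - \innp{\nabla f_{\alpha}(\vx^{(0)}), \vx^{(0)}}\big) - \psi^*(\vz^{(0)}) - \phi(\vx^*).
$$
Recall $\vx^{(1)} = \nabla\psi^*(\vz^{(0)}) = \nabla\psi^*(\nabla\psi(\vx^{(0)})) = \vx^{(0)}$ by the definition of $\vz^{(0)}$ and the conjugacy relation (this is exactly why the extra term $a_0 T_{\nabla f(\vx^{(0)})}$ was built into $\phi$). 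Moreover $\psi^*(\vz^{(0)}) = \innp{\vz^{(0)}, \vx^{(0)}} - \psi(\vx^{(0)})$, and $\phi(\vx^*) = \psi(\vx^*) - \innp{\vz^{(0)} - a_0 T_{\nabla f(\vx^{(0)})}, \vx^*}$. So the plan is to expand everything and collect terms so that $G_0$ becomes something like $\innp{T_{\nabla f(\vx^{(0)})}, \vx^{(0)}} - \innp{T_{\nabla f(\vx^{(0)})}, \vx^*} + \innp{\nabla f_{\alpha}(\vx^{(0)}), \vx^{(0)}} - D_\psi(\vx^*, \vx^{(0)})$, up to the sign conventions in the excerpt; then I would use $-D_\psi(\vx^*,\vx^{(0)}) \le 0$ (since $\psi$ is concave, the Bregman divergence has the favorable sign) to drop that term.

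Next I would estimate the surviving terms using the concrete value $\vx^{(0)} = \frac{1-\epsilon}{n\|\mA\|_\infty}\ones$. The key observation is that at $\vx^{(0)}$ the point is deep inside the packing polytope — in fact $(\mA\vx^{(0)})_i \le (1-\epsilon)/n < 1$ for every $i$ — so by the computation in Part \ref{it:approx-barr} of Proposition \ref{prop:smoothened-problem} (with the weaker bound $\mA\vx^{(0)} \le (1-\epsilon/2)\ones$) the barrier contribution $\sum_i A_{ij}(\mA\vx^{(0)})_i^{1/\alpha}$ is tiny, so $\nabla_j f_\alpha(\vx^{(0)}) \approx -1$ for every $j$, hence $T_{\nabla_j f(\vx^{(0)})} \approx -1$ as well (it is negative, so no truncation occurs). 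Consequently $\innp{\nabla f_\alpha(\vx^{(0)}), \vx^{(0)}} \approx -\innp{\ones,\vx^{(0)}} = -(1-\epsilon)/\|\mA\|_\infty$, which is a small negative number, and $-\innp{T_{\nabla f(\vx^{(0)})}, \vx^*} \approx \innp{\ones, \vx^*} = \opt$. The term $\innp{T_{\nabla f(\vx^{(0)})}, \vx^{(0)}} \approx -\innp{\ones,\vx^{(0)}}$ is also small and negative. Putting these together gives $G_0 \lesssim \opt + (\text{small terms})$, and the slack in the constants (using $\eta = 1/\epsilon$ to absorb the $\psi$-related scaling and the bound $\opt \ge 1/\|\mA\|_\infty$ from Part \ref{it:opt-bounds}) should comfortably yield $G_0 \le 2\opt$.

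The main obstacle I anticipate is bookkeeping rather than conceptual: carefully matching the sign conventions (the paper works with a concave $\psi$ and concave conjugates, and $f_\alpha$ is being minimized so $f_\alpha(\vx^*) < 0$, which flips intuitions about which way inequalities go), and making sure that the factor of $\eta = 1/\epsilon$ multiplying the regularizer $\psi$ does not blow up the gap — this is where the choice $\eta = 1/\epsilon$ and the smallness of $\vx^{(0)}$ (through the $\frac{1}{n\|\mA\|_\infty}$ scaling) must interact correctly. Concretely I would need $\eta \cdot \big(\psi\text{-type terms at }\vx^{(0)}\text{ and }\vx^*\big)$ to be $O(\opt)$; since $\psi$ evaluated near $\vx^{(0)}$ involves $x_j^{1-\alpha} \approx x_j$ for these tiny coordinates and $\vx^* \in [0,1]^n$ has bounded coordinates, the products $\eta \cdot x_j^{1-\alpha}$ should be controllable, but verifying the precise constant to land at $2\,\opt$ (as opposed to, say, $3\,\opt$) is the delicate part. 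A secondary subtlety is justifying $\nabla\psi^*(\nabla\psi(\vx^{(0)})) = \vx^{(0)}$ cleanly: this needs the explicit formula from Proposition \ref{prop:regularizer-and-its-arg}, namely $\nabla_j\psi^*(\vz^{(0)}) = (1 + z_j^{(0)}/\eta)^{-1/\alpha}$ together with $z_j^{(0)} = \nabla_j\psi(\vx^{(0)}) = \eta(-1 + (x_j^{(0)})^{-\alpha})$, which indeed returns $\vx^{(0)}$ coordinatewise.
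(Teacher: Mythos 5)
Your plan starts the same way the paper does — observe $\vx^{(1)} = \nabla\psi^*(\vz^{(0)}) = \vx^{(0)}$ so the $f_\alpha$-difference vanishes — and the intended final estimates are in the right spirit, but the central shortcut you propose is based on a sign error that kills the argument.

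You want to regroup the terms $-\psi^*(\vz^{(0)}) - \phi(\vx^*)$ into $-D_\psi(\vx^*, \vx^{(0)}) - a_0\innp{T_{\nabla f(\vx^{(0)})},\vx^*}$ and then drop $-D_\psi(\vx^*,\vx^{(0)})$ by claiming $-D_\psi(\vx^*,\vx^{(0)}) \le 0$ ``since $\psi$ is concave.'' The sign goes the other way. Concavity of $\psi$ means $\psi(\vy) \le \psi(\vx) + \innp{\nabla\psi(\vx), \vy-\vx}$, i.e.\ $D_\psi(\vy,\vx) \le 0$, hence $-D_\psi(\vx^*,\vx^{(0)}) \ge 0$. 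So this term is not droppable; it is in fact the dominant nonnegative piece of $G_0$ apart from the $\approx \opt$ contributed by $-a_0\innp{T_{\nabla f(\vx^{(0)})},\vx^*}$, and it must itself be bounded by about $\opt$ for the lemma to close at $2\opt$. (Your ``something like'' expansion also carries a spurious extra $\innp{T_{\nabla f(\vx^{(0)})},\vx^{(0)}}$ term, but that one is harmlessly nonpositive; the Bregman sign is the real problem.)

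The paper does not attempt to discard that term. It bounds $-\psi^*(\vz^{(0)})$ directly from the closed form $-\psi^*(\vz^{(0)}) = \frac{\eta\alpha}{1-\alpha}\sum_j (x_j^{(0)})^{1-\alpha}$, which is $\le \frac{1}{2}\opt$ precisely because $\eta\alpha = \frac{1}{4\log(mn\|\mA\|_\infty/\epsilon)}$ is small and $\innp{\ones,\vx^{(0)}} \le \opt$ (via $\opt \ge 1/\|\mA\|_\infty$). It then bounds $-\phi(\vx^*)$ by splitting into $-\psi(\vx^*) \le 0$ (using $\vx^* \le (1+\epsilon)\ones$ to get $\frac{(x^*_j)^{1-\alpha}}{1-\alpha} \ge x^*_j$) and the linear part $\innp{\nabla\psi(\vx^{(0)}) - \nabla f_\alpha(\vx^{(0)}), \vx^*} \le \frac{3}{2}\opt$, where $\nabla\psi(\vx^{(0)}) \le \frac{1}{2}\ones$ again rides on the smallness of $\alpha$ and $\nabla f_\alpha(\vx^{(0)}) \ge -\ones$. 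You do gesture at this mechanism at the end (``$\eta\cdot\psi$-type terms should be controllable''), but your write-up presents the Bregman-sign shortcut as the main step, and that step is false. To repair the proof, replace the droppable-Bregman claim with the explicit bounds on $-\psi^*(\vz^{(0)})$, $-\psi(\vx^*)$, and $\innp{\nabla\psi(\vx^{(0)}) - \nabla f_\alpha(\vx^{(0)}), \vx^*}$; all three hinge on $\eta\alpha$ being a small constant and on Proposition~\ref{prop:smoothened-problem}, not on a sign coming for free from concavity.
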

\begin{proof}
From the second part of Proposition \ref{prop:regularizer-and-its-arg}, $\vx^{(1)} = \nabla \psi^*(\vz^{(0)}) = \vx^{(0)}$. Therefore:
\begin{align}
G_0 &= f_{\alpha}(\vx^{(1)}) - f_{\alpha}(\vx^{(0)}) + \innp{\nabla f_{\alpha}(\vx^{(0)}), \vx^{(0)}} - \psi^*(\vz^{(0)}) - \phi(\vx^*)%\\
\leq - \psi^*(\vz^{(0)}) - \phi(\vx^*),\notag
\end{align}
as for $\vx^{(0)} = \frac{1-\epsilon}{n\|\mA\|_{\infty}}\ones$, $\nabla f_{\alpha}(\vx^{(0)}) < 0$ (see Proposition \ref{prop:smoothened-problem}). Using Proposition \ref{prop:regularizer-and-its-arg}:
\begin{align}\label{eq:-psi*-z-0}
-\psi^*(\vz^{(0)}) = \frac{\eta\alpha}{1-\alpha}\sum_{j=1}^n (x^{(0)}_j)^{1-\alpha}\leq \frac{\eta\alpha(1+\epsilon)}{1-\alpha}\innp{\ones, \vx^{(0)}} \leq \frac{1}{2}\opt.
\end{align}
It remains to bound $-\phi(\vx^*) = - \psi(\vx^*) + \innp{\nabla\psi(\vx^{(0)})-\nabla f_{\alpha}(\vx^{(0)}), \vx^*}$. As $\vx^* \leq (1+\epsilon)\ones$ (by Proposition \ref{prop:smoothened-problem}), it follows that $\frac{(x^*_j)^{1-\alpha}}{1-\alpha}\geq x^*_j$, $\forall j$, and thus $-\psi(\vx^*) \leq 0$. Further, observe that:
\begin{align}
\nabla_j \psi(\vx^{(0)}) = \eta \left(-1 + \left(x^{(0)}_j\right)^{-\alpha}\right) = \frac{1}{\epsilon}\left(-1 + \left(\frac{n\|\mA\|_{\infty}}{1-\epsilon}\right)^{\frac{\epsilon/4}{\log(mn\|\mA\|_{\infty}/\epsilon)}}\right) \leq \frac{1}{2}.\notag
\end{align}
Finally, as $\nabla f_{\alpha}(\vx^{(0)})\geq -\ones$, we have:
\begin{align}
-\phi(\vx^*) \leq \innp{\nabla \psi(\vx^{(0)})-\nabla f_{\alpha}(\vx^{(0)}), \vx^*} \leq \frac{3}{2}\innp{\ones, \vx^*} = \frac{3}{2}\opt,\notag
\end{align}
and, combining with (\ref{eq:-psi*-z-0}), we get the final bound $G_0 \leq 2 \opt$, as claimed.
\end{proof}

\begin{lemma}\label{lemma:main-lemma}(Main Lemma.) Let $k \geq 1$. If $G_{k-1} \leq 2\opt$, then $A_k G_k \leq A_{k-1}G_{k-1}$.
\end{lemma}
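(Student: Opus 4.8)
We outline how the inequality would be established. The starting point is the usual telescoping: expanding $G_k$ and $G_{k-1}$ from \eqref{eq:the-gap}, using $A_k=A_{k-1}+a_k$ with $a_k=\gamma$ for $k\ge 1$, the terms $\phi(\vx^*)$ and the shared part of the weighted sum cancel, and one is left with
\[
A_kG_k-A_{k-1}G_{k-1}=A_k\bigl(f_{\alpha}(\vx^{(k+1)})-f_{\alpha}(\vx^{(k)})\bigr)+\gamma\innp{\nabla f_{\alpha}(\vx^{(k)}),\vx^{(k)}}-\bigl(\psi^*(\vz^{(k)})-\psi^*(\vz^{(k-1)})\bigr),
\]
so it suffices to upper bound the right-hand side by $0$. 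The plan is to bound the first term from above via the local smoothness of $f_{\alpha}$ (Lemma~\ref{lemma:grad-step}) and the last term from below via the closed form of $\psi^*$ (Proposition~\ref{prop:regularizer-and-its-arg}).

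The hypothesis $G_{k-1}\le 2\opt$ is used in exactly one place: since $L_{k-1}\le f_{\alpha}(\vx^*)<0$ it forces $f_{\alpha}(\vx^{(k)})=U_{k-1}\le 2\opt$, and feeding this into the closed form of $f_{\alpha}$, together with $\opt\le n$ and the fact that every column of $\mA$ contains an entry $\ge 1$ (so $(\mA\vx^{(k)})_i\ge x_j^{(k)}$ for a suitable $i$), yields after a short computation using the choice of $\alpha$ the a priori bound $\max_j x_j^{(k)}<2$. Combined with the elementary estimate $1+z_j^{(k)}/\eta\le 3$ that holds throughout the run (since $z_j^{(0)}/\eta\le\epsilon/2$ and $z_j$ increases by at most $\gamma$ per step while $\gamma k=A_k-1\le\eta$), this gives $(x_j^{(k)})^{\alpha}=(1+z_j^{(k)}/\eta)^{-1}\in[\tfrac13,2]$ and $\vx^{(k+1)}\ge\zeros$. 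With this in hand, by Proposition~\ref{prop:regularizer-and-its-arg} and $\vz^{(k)}=\vz^{(k-1)}+\gamma T_{\nabla f(\vx^{(k)})}$ one has $\vx^{(k+1)}=\vx^{(k)}+\mB\vx^{(k)}$ with $\mB=\diag(\vbeta)$, $\beta_j=\bigl(1+\tfrac{\alpha^2}{4}(x_j^{(k)})^{\alpha}T_{\nabla_j f(\vx^{(k)})}\bigr)^{-1/\alpha}-1=-c_j\alpha T_{\nabla_j f(\vx^{(k)})}$; a short expansion (using $\gamma=\tfrac{\alpha^2\eta}{4}$ and $(1+t)^{-1/\alpha}\ge 1-t/\alpha$) shows $c_j=\tfrac14(x_j^{(k)})^{\alpha}+O(\alpha)\in[0,\tfrac12)$, so Lemma~\ref{lemma:grad-step} applies and bounds the first term by $-A_k\alpha\sum_j c_j(1-2c_j)\nabla_j f_{\alpha}(\vx^{(k)})\,T_{\nabla_j f(\vx^{(k)})}\,x_j^{(k)}$. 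For the last term, $\psi^*$ is separable with $\partial_{z_j}\psi^*(\vz)=x_j$ and $\partial^2_{z_j}\psi^*(\vz)=-\tfrac1{\eta\alpha}x_j^{1+\alpha}$, so a second-order Taylor estimate along $[\vz^{(k-1)},\vz^{(k)}]$ (on which each coordinate lies between $x_j^{(k)}$ and $x_j^{(k+1)}\le\bar x_j:=\max\{x_j^{(k)},x_j^{(k+1)}\}<3$) gives $\psi^*(\vz^{(k)})-\psi^*(\vz^{(k-1)})\ge\gamma\innp{T_{\nabla f(\vx^{(k)})},\vx^{(k)}}-\tfrac{\gamma^2}{2\eta\alpha}\sum_j\bar x_j^{\,1+\alpha}T_{\nabla_j f(\vx^{(k)})}^2$.

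Plugging both estimates in, using that $\nabla_j f_{\alpha}(\vx^{(k)})$ and $T_{\nabla_j f(\vx^{(k)})}$ have the same sign with $|T_{\nabla_j f(\vx^{(k)})}|\le|\nabla_j f_{\alpha}(\vx^{(k)})|$, and that $\gamma\innp{\nabla f_{\alpha}(\vx^{(k)})-T_{\nabla f(\vx^{(k)})},\vx^{(k)}}=\gamma\sum_{j:\,\nabla_j f_{\alpha}(\vx^{(k)})>1}(\nabla_j f_{\alpha}(\vx^{(k)})-1)x_j^{(k)}\ge 0$, the claim reduces to checking, for each $j$,
\[
A_k\alpha\,c_j(1-2c_j)\,\nabla_j f_{\alpha}(\vx^{(k)})\,T_{\nabla_j f(\vx^{(k)})}\,x_j^{(k)}\ \ge\ \gamma\bigl(\nabla_j f_{\alpha}(\vx^{(k)})-T_{\nabla_j f(\vx^{(k)})}\bigr)x_j^{(k)}+\frac{\gamma^2}{2\eta\alpha}\bar x_j^{\,1+\alpha}T_{\nabla_j f(\vx^{(k)})}^2 .
\]
If $\nabla_j f_{\alpha}(\vx^{(k)})\le 1$ the first term on the right vanishes, $T_{\nabla_j f(\vx^{(k)})}=\nabla_j f_{\alpha}(\vx^{(k)})$, and $\bar x_j\le(1+\alpha/2)x_j^{(k)}$; cancelling the common factor $(\nabla_j f_{\alpha}(\vx^{(k)}))^2$ and noting $c_j(1-2c_j)=\Theta((x_j^{(k)})^{\alpha})$, the remaining powers of $x_j^{(k)}$ cancel and the inequality becomes $A_k\ge\Theta(\gamma^2/(\eta\alpha^2))=\Theta(\gamma)$, which holds since $A_k\ge A_0=1$ and $\gamma<1$. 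If $\nabla_j f_{\alpha}(\vx^{(k)})>1$ then $T_{\nabla_j f(\vx^{(k)})}=1$, $\bar x_j=x_j^{(k)}$, and $(x_j^{(k)})^{\alpha}\ge\tfrac13$ makes $c_j(1-2c_j)$ at least an absolute constant; dividing by $x_j^{(k)}$, bounding $\nabla_j f_{\alpha}(\vx^{(k)})-1<\nabla_j f_{\alpha}(\vx^{(k)})$, $(x_j^{(k)})^{\alpha}\le 2$ and $1<\nabla_j f_{\alpha}(\vx^{(k)})$, it reduces to $A_k\alpha\ge\Theta(\gamma+\gamma^2/(\eta\alpha))=\Theta(\alpha^2\eta)$, i.e.\ $A_k\ge\Theta(\alpha\eta)$, which holds because $A_k\ge1$ and $\alpha\eta=\tfrac{1/4}{\log(mn\|\mA\|_\infty/\epsilon)}<1$.

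The delicate point --- and the only reason the parameters are chosen as they are --- is the case $\nabla_j f_{\alpha}(\vx^{(k)})>1$, where gradient truncation is active: the nonnegative \emph{truncation cost} $\gamma(\nabla_j f_{\alpha}(\vx^{(k)})-1)x_j^{(k)}$, unbounded in $\nabla_j f_{\alpha}(\vx^{(k)})$, must be absorbed by the first-order decrease $A_k\alpha c_j(1-2c_j)\nabla_j f_{\alpha}(\vx^{(k)})x_j^{(k)}$ supplied by Lemma~\ref{lemma:grad-step}; this is what forces $\gamma=\Theta(A_k\alpha)$, hence the step size $\gamma=\tfrac{\alpha^2\eta}{4}$ (together with $A_k\ge1$), and ultimately the $\widetilde{O}(1/\epsilon^2)$ iteration count. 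The rest is bookkeeping: Lemma~\ref{lemma:grad-step} needs $c_j<\tfrac12$, which is exactly why the bound $\max_j x_j^{(k)}<2$ extracted from $G_{k-1}\le 2\opt$ is required, and $\psi^*$ is smooth only locally, so the second-order estimate must be taken over the correct (bounded) range of $x_j$ along the dual segment.
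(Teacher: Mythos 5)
Your proof is correct in substance and follows the same high-level plan as the paper: telescope the weighted gap, bound the decrease in $U$ via the local multiplicative smoothness of $f_{\alpha}$ (Lemma~\ref{lemma:grad-step}), bound the change in $L$ via a second-order Taylor estimate on $\psi^*$, and close with a per-coordinate comparison distinguishing whether the gradient is truncated. The one genuinely different choice is how you use the inductive hypothesis $G_{k-1}\le 2\opt$. The paper routes it through the \emph{lower} bound: since $U_{k-1}\ge-\opt$ the hypothesis forces $L_{k-1}\ge-3\opt$, and unpacking the closed form of $\psi^*$ in $L_{k-1}$ (Proposition~\ref{prop:small-enough-z-k}) gives $z_j^{(k-1)}/\eta\ge-\epsilon/2$, hence the multiplicative update factor $c_j\in[\tfrac{1}{16},\tfrac{1}{4(1-\epsilon)}]$ of Proposition~\ref{prop:step-size}. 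You instead route it through the \emph{upper} bound: $L_{k-1}\le f_\alpha(\vx^*)<0$ forces $U_{k-1}=f_\alpha(\vx^{(k)})\le 2\opt\le 2n$, and since every column of $\mA$ contains an entry $\ge 1$ this caps $\max_j x_j^{(k)}$. Both arguments exploit the same bijection $x_j^{(k)}=(1+z_j^{(k-1)}/\eta)^{-1/\alpha}$ and deliver equivalent information; your version is slightly more direct and avoids unpacking $L$, which is a small clarity gain, though the constant it delivers is also slightly weaker.

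Two small inaccuracies to tighten. First, the interval you state, $(x_j^{(k)})^{\alpha}\in[\tfrac13,2]$, is looser than what your own argument establishes: from $\max_j x_j^{(k)}<2$ you actually get $(x_j^{(k)})^{\alpha}<2^{\alpha}\approx 1+\alpha\log 2$, which is essentially $1$. This matters, because if one only knew $(x_j^{(k)})^{\alpha}\le 2$ then $c_j\approx\tfrac14(x_j^{(k)})^{\alpha}$ could approach $\tfrac12$ and the factor $1-2c_j$ in Lemma~\ref{lemma:grad-step} would degenerate to $0$, breaking the claimed $c_j(1-2c_j)=\Theta(1)$. The proof survives because the honest upper bound is $\approx 1$, not $2$; just record it that way. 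Second, the step ``$\bar x_j\le(1+\alpha/2)x_j^{(k)}$ in the untruncated case'' deserves a one-line justification (when $T_{\nabla_j f(\vx^{(k)})}<0$ you have $x_j^{(k+1)}=(1+c_j\alpha|T_{\nabla_j f(\vx^{(k)})}|)x_j^{(k)}$ with $c_j|T_{\nabla_j f(\vx^{(k)})}|<\tfrac12$), since this is where the quadratic term in your $\psi^*$ estimate is tied back to $x_j^{(k)}$. Neither of these is a gap, only bookkeeping.
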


Assuming that Lemma \ref{lemma:main-lemma} holds, we can apply it inductively starting with the initial gap bound from Lemma~\ref{lemma:initial-gap} to prove the following convergence result for Algorithm \ref{algo:pc-lp}.

\begin{theorem}\label{thm:packing}
Algorithm \ref{algo:pc-lp} (\textsc{PackingCoveringLP}) produces a solution $\vx^{(K)}$ such that $\mA\vx^{(K)}\leq (1+\epsilon)\ones$ and $\innp{\ones, \vx^{(K)}} \geq (1-3\epsilon)\opt$.
\end{theorem}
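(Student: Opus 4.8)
\textbf{Proof plan for Theorem \ref{thm:packing}.}

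The plan is to combine the three already-stated ingredients: the approximation guarantee from Proposition \ref{prop:smoothened-problem} (Part \ref{it:approx-obj}), the initial gap bound from Lemma \ref{lemma:initial-gap}, and the Main Lemma (Lemma \ref{lemma:main-lemma}). The strategy is a straightforward induction to convert the per-iteration gap decrease into a final multiplicative accuracy bound on $f_\alpha$, and then to apply Proposition \ref{prop:smoothened-problem} to translate that into feasibility and near-optimality for the packing LP.

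First I would set up the induction. By Lemma \ref{lemma:initial-gap}, $G_0 \le 2\opt$ (recalling $\opt = \innp{\ones,\vx^*} = -f_\alpha$ up to the barrier correction — more precisely we use $\opt$ as defined there, with $\vx^*$ the minimizer of $f_\alpha$). The base case of the induction is that $A_0 G_0 \le A_0 G_0$ trivially, and the hypothesis $G_0 \le 2\opt$ is exactly the premise needed to invoke Lemma \ref{lemma:main-lemma} at $k=1$. Inductively, suppose $A_{k-1}G_{k-1} \le A_0 G_0$ and $G_{k-1} \le 2\opt$. Since $A_k$ is increasing ($A_k = A_{k-1} + \gamma > A_{k-1} \ge A_0 = 1$), we get $G_k \le \frac{A_{k-1}}{A_k}G_{k-1} \le G_{k-1} \le 2\opt$ by the Main Lemma, which both closes the induction and maintains the premise for the next step. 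Hence for every $k$, $A_k G_k \le A_0 G_0 \le 2\opt$, i.e. $G_k \le \frac{2\opt}{A_k}$.

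Next I would use the termination condition. The algorithm stops at the first iteration $K$ with $A_K > \eta = 1/\epsilon$; since $A_K = A_{K-1} + \gamma$ and $A_{K-1} \le \eta$, we have $A_K \ge \eta = 1/\epsilon$ (I would note $\gamma = \alpha^2\eta/4$ is tiny, so this is essentially $A_K \approx 1/\epsilon$, but only the lower bound $A_K \ge 1/\epsilon$ is needed). Therefore $G_K \le \frac{2\opt}{A_K} \le 2\epsilon\,\opt$. Since $f_\alpha(\vx^{(K+1)}) - f_\alpha(\vx^*) \le U_K - L_K = G_K$ (recall $U_K = f_\alpha(\vx^{(K+1)})$ and $L_K \le f_\alpha(\vx^*)$ by Lemma \ref{lemma:lower-bound}), and since $f_\alpha(\vx^*) < 0$ with $-f_\alpha(\vx^*) = \Theta(\opt)$ (by the argument in the proof of Proposition \ref{prop:smoothened-problem}, $f_\alpha((1-\epsilon/2)\vx^*_{\mathrm{LP}}) \le -(1-\epsilon/4)\opt$ so $f_\alpha(\vx^*) \le -(1-\epsilon/4)\opt$), I can write $f_\alpha(\vx^{(K+1)}) \le f_\alpha(\vx^*) + 2\epsilon\,\opt \le f_\alpha(\vx^*) - \frac{2\epsilon}{1-\epsilon/4}f_\alpha(\vx^*) \le (1 - 3\epsilon)f_\alpha(\vx^*)$ for $\epsilon \le 1/4$ (adjusting the constant as needed). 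Finally, apply Proposition \ref{prop:smoothened-problem}, Part \ref{it:approx-obj}, with the approximation parameter rescaled by a constant: $f_\alpha(\vx^{(K+1)}) \le (1-O(\epsilon))f_\alpha(\vx^*)$ yields $\mA\vx^{(K+1)} \le (1+\epsilon/2)\ones$ and $\innp{\ones,\vx^{(K+1)}} \ge (1-3\epsilon)\opt$; the returned point is $\vx^{(K)}$ in the pseudocode but equals $\vx^{(K+1)}$ in the analysis indexing (the algorithm's last constructed iterate), so I would just match up the indexing conventions.

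The main obstacle — really the only nontrivial point here, since Lemma \ref{lemma:main-lemma} is taken as given — is bookkeeping the constants carefully: the gap bound gives additive error $2\epsilon\,\opt$, and this must be converted to a multiplicative factor on $f_\alpha(\vx^*)$, which requires the lower bound $-f_\alpha(\vx^*) = \Omega(\opt)$ that is only implicit in the proof of Proposition \ref{prop:smoothened-problem}. I would want to state that lower bound explicitly (it follows from evaluating $f_\alpha$ at $(1-\epsilon/2)$ times the LP optimum as in that proof), and then check that the chosen constants make Part \ref{it:approx-obj}'s hypothesis $f_\alpha(\vx) \le (1-\epsilon')f_\alpha(\vx^*_\alpha)$ hold with $\epsilon'$ a small enough multiple of $\epsilon$ so that the conclusion gives the claimed $(1+\epsilon)$-feasibility and $(1-3\epsilon)$-optimality; one may need to rescale the internal $\epsilon$ by an absolute constant at the very start, which is standard and harmless.
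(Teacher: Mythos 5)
Your proposal is correct and follows essentially the same route as the paper: inductively combining Lemma \ref{lemma:initial-gap} and Lemma \ref{lemma:main-lemma} to get $G_K \le \epsilon G_0 \le 2\epsilon\,\opt$, converting this additive gap to a multiplicative bound on $f_\alpha$, and invoking Proposition \ref{prop:smoothened-problem}. You are in fact more careful than the paper at the additive-to-multiplicative step: the paper simply writes $f_\alpha(\vx^{(K+1)})\leq f_{\alpha}(\vx^*) + 2\epsilon \opt \leq (1-2\epsilon)f_{\alpha}(\vx^*)$ without justifying the required lower bound $-f_\alpha(\vx^*) = \Omega(\opt)$ (and without reconciling the two different quantities denoted $\opt$ in Proposition \ref{prop:smoothened-problem} versus Lemma \ref{lemma:initial-gap}), whereas you correctly identify that this needs the estimate $f_\alpha(\vx^*)\le -(1-\epsilon/4)\opt$ implicit in the proof of Proposition \ref{prop:smoothened-problem} and flag the harmless index shift between $\vx^{(K)}$ and $\vx^{(K+1)}$.
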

\begin{proof}
Observe first that the algorithm ends with $K$ such that $A_{K-1} \leq \eta \leq A_{K}$. Using Lemma~\ref{lemma:initial-gap} and applying Lemma~\ref{lemma:main-lemma} inductively, we get $A_k G_k - A_{k-1}G_{k-1}\leq 0$, $\forall k \leq K$. Therefore, we immediately have $G_K \leq \frac{a_0}{A_K}G_0 \leq \epsilon G_0$. From the definition of $G_K$, applying Lemma~\ref{lemma:initial-gap}, $f_{\alpha}(\vx^{(K+1)})\leq f_{\alpha}(\vx^*) + 2\epsilon \opt \leq (1-2\epsilon)f_{\alpha}(\vx^*)$. The rest of the proof follows by applying Proposition \ref{prop:smoothened-problem}.
\end{proof}
\paragraph{Proof of the Main Lemma.} 
The plan for proving the Main Lemma is as follows. First, we show that under the inductive hypothesis that $G_{k-1}\leq 2\opt$ (which is true initially by Lemma \ref{lemma:initial-gap}), $\vz^{(k-1)}$ cannot get ``too small'' (Proposition \ref{prop:small-enough-z-k}). We then use Proposition \ref{prop:small-enough-z-k} to determine the largest possible step size $a_k$ that preserves local smoothness of the upper bound from Lemma \ref{lemma:grad-step} (Proposition \ref{prop:step-size}), and we use Proposition \ref{prop:step-size} jointly with Lemma \ref{lemma:grad-step} to show that the upper bound must decrease sufficiently. Recall that we would like to choose $a_k$ to be as large as possible, since $A_K = \sum_{k=0}^K a_k$ and, by Lemma \ref{lemma:main-lemma} (Main Lemma) the rate of growth of $A_K$ determines the rate at which we decrease the gap. The remaining part of the proof is to show that any decrease in the lower bound between iterations $k-1$ and $k$ is dominated by the decrease in the upper bound from Lemma \ref{lemma:upper-bound-decrease} (Lemma \ref{lemma:lower-bound-change}).

%Now it remains to show that $A_k G_k$ is non-increasing with $k$. To do so, we will construct the following two arguments: (i) for sufficiently small $a_k$, we can apply Lemma \ref{lemma:grad-step} to obtain a large decrease in the upper bound, and (ii) the change in the lower bound between two successive steps cannot be too large, and, in particular, is dominated by the decrease in the upper bound. To show (i), we will inductively use the following proposition (observe that, by Lemma \ref{lemma:initial-gap}, $G_0 \leq 2\opt$ is true initially).

\begin{proposition}\label{prop:small-enough-z-k}
If $G_k \leq 2\opt$, $\eta \leq A_k$, and $\epsilon \leq 1/4$, then $\vz^{(k)} \geq -(\epsilon\eta/2) \ones$.
\end{proposition}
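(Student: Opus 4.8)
I would argue by contradiction, and the crucial observation is that the gap directly controls the objective value at the newly produced iterate. Recalling (from the overview of the convergence analysis) that $G_k = U_k - L_k$ with $U_k = f_\alpha(\vx^{(k+1)})$, and $L_k \le f_\alpha(\vx^*) \le f_\alpha(\zeros) = 0$ by Lemma~\ref{lemma:lower-bound} and the fact that $\zeros$ is feasible for the minimization of $f_\alpha$, we get $f_\alpha(\vx^{(k+1)}) \le G_k$. Combined with the hypothesis $G_k \le 2\opt$ and the bound $\opt \le n$ from Proposition~\ref{prop:smoothened-problem}, this yields $f_\alpha(\vx^{(k+1)}) \le 2n$. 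Hence it suffices to show that if some coordinate satisfies $z_\ell^{(k)} < -\epsilon\eta/2$, then necessarily $f_\alpha(\vx^{(k+1)}) > 2n$, the desired contradiction.

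To do so, I would follow the chain ``$z_\ell^{(k)}$ small $\Rightarrow$ $x_\ell^{(k+1)}$ huge $\Rightarrow$ packing barrier explodes''. Since $\eta = 1/\epsilon$, the assumption $z_\ell^{(k)} < -\epsilon\eta/2$ is exactly $1 + z_\ell^{(k)}/\eta < 1 - \epsilon/2$, so by Proposition~\ref{prop:regularizer-and-its-arg} the corresponding primal coordinate is $x_\ell^{(k+1)} = \nabla_\ell \psi^*(\vz^{(k)}) = (1 + z_\ell^{(k)}/\eta)^{-1/\alpha} > (1-\epsilon/2)^{-1/\alpha}$. Using $\log\frac{1}{1-\epsilon/2} \ge \epsilon/2$ together with $\alpha \le \frac{\epsilon/4}{\log(mn\|\mA\|_\infty/\epsilon)}$, the exponent gives $(1-\epsilon/2)^{-1/\alpha} \ge (mn\|\mA\|_\infty/\epsilon)^2$, so $x_\ell^{(k+1)}$ is already enormous. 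Let $\bar{x} = \max_j x_j^{(k+1)} \ge x_\ell^{(k+1)}$, attained in some column $\ell^*$; since $\vx^*$ exists, no column of $\mA$ can be identically zero, and $\min_{ij: A_{ij}\ne 0} A_{ij} = 1$, so there is a row $i^*$ with $A_{i^*\ell^*} \ge 1$ and hence $(\mA\vx^{(k+1)})_{i^*} \ge \bar{x}$. Then
\[
f_\alpha(\vx^{(k+1)}) \;\ge\; -\innp{\ones, \vx^{(k+1)}} + \frac{\alpha}{1+\alpha}(\mA\vx^{(k+1)})_{i^*}^{\frac{1+\alpha}{\alpha}} \;\ge\; -n\bar{x} + \frac{\alpha}{1+\alpha}\,\bar{x}\cdot \bar{x}^{1/\alpha}.
\]
The only genuine computation is then the estimate $\frac{\alpha}{1+\alpha}\bar{x}^{1/\alpha} \ge \frac{\alpha}{1+\alpha}(mn\|\mA\|_\infty/\epsilon)^{2/\alpha} \ge 2n$, which holds with an astronomical margin because $(mn\|\mA\|_\infty/\epsilon)^{2/\alpha} = \exp\!\big(\tfrac{8}{\epsilon}(\log(mn\|\mA\|_\infty/\epsilon))^2\big)$ while $n \le mn\|\mA\|_\infty \le mn\|\mA\|_\infty/\epsilon$. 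Consequently $f_\alpha(\vx^{(k+1)}) \ge n\bar{x} > n(mn\|\mA\|_\infty/\epsilon)^2 \ge 16n > 2n$, contradicting $f_\alpha(\vx^{(k+1)}) \le 2n$.

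Two minor points remain. First, the argument implicitly assumes $1 + z_j^{(k)}/\eta > 0$ for all $j$ (so that $\vx^{(k+1)} = \nabla\psi^*(\vz^{(k)})$ is well-defined and nonnegative, as needed for $U_k = f_\alpha(\vx^{(k+1)})$); in the remaining degenerate case where $z_j^{(k)} \le -\eta$ for some $j$, the term $-\psi^*(\vz^{(k)})$ in the gap expression~(\ref{eq:the-gap}) is $+\infty$, already inconsistent with $G_k \le 2\opt$, so that case cannot occur, and since $-\eta < -\epsilon\eta/2$ it is anyway subsumed by continuity. Second, I would stress why the gap hypothesis is indispensable: the naive, purely algorithmic bound $z_j^{(k)} \ge z_j^{(0)} - \sum_{s=1}^k a_s \ge 1 - A_k$ (from $T_{\nabla_j f} \ge -1$ and $z_j^{(0)} \ge 0$) only gives $z_j^{(k)} \ge 1 - A_k$, which is far weaker than $-\epsilon\eta/2$ once $A_k$ approaches $\eta = 1/\epsilon$. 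The main (and essentially only) obstacle is therefore to make precise that a single overgrown primal coordinate forces the smoothed packing barrier to explode past $2\opt$; everything else is bookkeeping against the bounds already established in Propositions~\ref{prop:smoothened-problem} and~\ref{prop:regularizer-and-its-arg}.
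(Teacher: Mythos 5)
Your proof is correct, and it takes a genuinely different route from the paper's. The paper bounds the \emph{lower} bound: from $U_k \ge -\opt$ (since $f_\alpha(\vx)\ge f_\alpha(\vx^*)\ge -\innp{\ones,\vx^*}$) and $G_k\le 2\opt$, it deduces $L_k\ge -3\opt$, drops the non-positive terms in $L_k$ to get $-\tfrac{1}{A_k}\psi^*(\vz^{(k)})\ge -3\opt$, then uses the hypothesis $\eta\le A_k$ and the explicit form $\psi^*(\vz^{(k)}) = -\tfrac{\eta\alpha}{1-\alpha}\sum_j (1+z_j^{(k)}/\eta)^{-(1-\alpha)/\alpha}$ to force a per-coordinate contradiction. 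You instead bound the \emph{upper} bound: from $L_k\le f_\alpha(\vx^*)\le 0$ you get $f_\alpha(\vx^{(k+1)})=U_k\le G_k\le 2\opt$, push the small $z_\ell^{(k)}$ through $\nabla\psi^*$ to make $x_\ell^{(k+1)}$ enormous, and then use the fact that every column of $\mA$ has an entry $\ge 1$ so a single overgrown coordinate makes the packing barrier in $f_\alpha(\vx^{(k+1)})$ explode. The two arguments are dual views of the same blow-up phenomenon (indeed $\psi^*(\vz)=-\tfrac{\eta\alpha}{1-\alpha}\sum_j x_j^{1-\alpha}$ and the barrier both diverge together), but yours has a concrete advantage: it never uses the hypothesis $\eta\le A_k$, so it proves a slightly stronger statement, and it also sidesteps the need to argue about the sign of $\phi(\vx^*)$ when isolating $\psi^*(\vz^{(k)})$ inside $L_k$. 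Your treatment of the degenerate case $1+z_j^{(k)}/\eta\le 0$ is a nice touch that the paper glosses over. Two harmless bookkeeping points: you invoke $\opt\le n$ from Proposition~\ref{prop:smoothened-problem}, which refers to the packing-LP optimum, whereas the $\opt$ in the gap lemmas is $\innp{\ones,\vx^*}$ for the minimizer of $f_\alpha$; these differ by at most a factor $1+\epsilon/2$, which your $16n$-vs-$2n$ margin absorbs easily. And the no-zero-column assumption you flag is indeed standard (otherwise both the LP and $f_\alpha$ are unbounded below), so it costs nothing.
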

\begin{proof}
Observe that, by the definition of $f_{\alpha}(\cdot)$, $U_k \geq -\opt$. Therefore, if $G_k \leq 2\opt$, it must be $L_k \geq - 3\opt$. Further, observe that $\forall \vx \geq 0$, it must be $f_{\alpha}(\vx) - \innp{\nabla f_{\alpha}(\vx), \vx} \leq 0$, as $f_{\alpha}(\vx) - \innp{\nabla f_{\alpha}(\vx), \vx} = \left(\frac{\alpha}{1+\alpha}-1\right)\sum_{i=1}^m (\mA\vx)_i^{\frac{1+\alpha}{\alpha}}$ and $\alpha < 1$. Therefore, using the definition of the lower bound (Lemma \ref{lemma:lower-bound}), $-\frac{1}{A_k}\psi^*(\vz^{(k)})\geq -3\opt$. As $\eta \leq A_k$ and (by Proposition \ref{prop:regularizer-and-its-arg}) $\psi^*(\vz^{(k)}) = -\frac{\eta\alpha}{1-\alpha}\sum_{j=1}^n (1 + z^{(k)}_j/\eta)^{-\frac{1-\alpha}{\alpha}}$:
\begin{align}
&\frac{\alpha}{1-\alpha} \sum_{j=1}^n (1 + z^{(k)}_j/\eta)^{-\frac{1-\alpha}{\alpha}} \leq 3\opt \notag\\
\Rightarrow \quad & \forall j,\; \frac{\alpha}{1-\alpha} (1 + z^{(k)}_j/\eta)^{-\frac{1-\alpha}{\alpha}} \leq 3\opt \notag.
\end{align}
Suppose that $z^{(k)}_j/\eta < 1-\epsilon/2$ for some $j$. Then, $(1 + z^{(k)}_j/\eta)^{-\frac{1-\alpha}{\alpha}} > e^{\frac{\epsilon(1-\alpha)}{\alpha}} \geq \left(\frac{mn\|\mA\|_{\infty}}{\epsilon}\right)^{7/4} \geq \frac{3\opt(1-\alpha)}{\alpha}$, which is a contradiction.
\end{proof}

Using Proposition \ref{prop:small-enough-z-k}, we now determine the value of $a_k$ that ensures we can apply local smoothness from Lemma \ref{lemma:grad-step} to the upper bound. Recall that we would like to make $a_k$ as large as possible to obtain faster decrease in $G_k$, which, by Theorem \ref{thm:packing}, translates into the convergence time of Algorithm \ref{algo:pc-lp}.
%We are now ready to determine the size of $a_k$, as follows:
\begin{proposition}\label{prop:step-size}
Let $a_k = \frac{\alpha^2 \eta}{4}$. If $G_{k-1}\leq 2\opt$ and $A_{k-1}\leq \eta$, then, $x^{(k + 1)}_j = \left(1-c_j\alpha  T_{\nabla_j f(\vx^{(k)})}\right) x^{(k)}_j$, for all $j$, where $c_j$ is a number such that $c_j \in \left[\frac{1}{16}, \frac{1}{4(1-\epsilon)}\right]$.
\end{proposition}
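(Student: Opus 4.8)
The plan is to unfold the algorithm's dual‑averaging update into a closed‑form multiplicative step on $\vx$, read off $c_j$ from that identity, and bound it with a short second‑order Taylor estimate; the only non‑routine ingredient is two‑sided control of the dual iterate $\vz^{(k-1)}$. Concretely, I would first put the step in closed form. By Proposition~\ref{prop:regularizer-and-its-arg} and the algorithm, $x_j^{(k)}=(1+z_j^{(k-1)}/\eta)^{-1/\alpha}$, $x_j^{(k+1)}=(1+z_j^{(k)}/\eta)^{-1/\alpha}$, $z_j^{(k)}=z_j^{(k-1)}+a_k T_{\nabla_j f(\vx^{(k)})}$, and $a_k/\eta=\alpha^2/4$; writing $w_j\defeq 1+z_j^{(k-1)}/\eta$ and $T\defeq T_{\nabla_j f(\vx^{(k)})}\in[-1,1]$ this gives
\[
x_j^{(k+1)}=x_j^{(k)}\bigl(1+\mu_j T\bigr)^{-1/\alpha},\qquad \mu_j\defeq\frac{\alpha^2}{4w_j}.
\]
Hence it suffices to exhibit $c_j\in[\tfrac1{16},\tfrac1{4(1-\epsilon)}]$ with $(1+\mu_j T)^{-1/\alpha}=1-c_j\alpha T$; for $T\neq 0$ this forces $c_j=\frac{1-(1+\mu_j T)^{-1/\alpha}}{\alpha T}$, and for $T=0$ one may take $c_j=\tfrac18$.

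Second, I would prove $w_j\in[\,1-\tfrac\epsilon2,\ 2\,)$. For the upper bound: $z_j^{(0)}=\nabla_j\psi(\vx^{(0)})\in[0,\tfrac12]$ (the estimate $\nabla_j\psi(\vx^{(0)})\le\tfrac12$ already appears in the proof of Lemma~\ref{lemma:initial-gap}), every later increment of $z_j$ has absolute value $\le\gamma$, and there are $k-1$ of them with $(k-1)\gamma=A_{k-1}-1\le\eta-1$, so $z_j^{(k-1)}\le\eta-\tfrac12$ and $w_j\le2-\tfrac\epsilon2$. For the lower bound I would invoke Proposition~\ref{prop:small-enough-z-k} at index $k-1$, using the inductive hypothesis $G_{k-1}\le 2\opt$ (and $A_{k-1}\le\eta$), to conclude $\vz^{(k-1)}\ge-\tfrac{\epsilon\eta}{2}\ones$, i.e. $w_j\ge1-\tfrac\epsilon2$. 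In particular $\mu_j\le\alpha^2/(4(1-\epsilon/2))$ is far smaller than $\alpha$, so $|\mu_j T|<1$, $1+\mu_j T>0$, and the claimed multiplicative form of the step is well posed with $0<1-c_j\alpha T<2$ (consistent with the $c_j\in[0,\tfrac12)$ requirement of Lemma~\ref{lemma:grad-step}).

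Third, I would estimate $c_j$ by expanding $g(r)=(1+r)^{-1/\alpha}$ at $r=0$. Convexity gives $g(r)\ge1-r/\alpha$ for all $r>-1$, and the second‑order Lagrange form gives $g(r)=1-\tfrac r\alpha+\tfrac1{2\alpha}\bigl(\tfrac1\alpha+1\bigr)(1+\xi)^{-1/\alpha-2}r^2$ for some $\xi$ between $0$ and $r$. Crucially, since $\mu_j=\Theta(\alpha^2)$ we have $|r|\le\mu_j$ and $\tfrac1\alpha|r|=O(\alpha)$, so $(1+\xi)^{-1/\alpha-2}\in(0,2]$. Substituting $r=\mu_j T$ and simplifying gives $c_j=\tfrac1{4w_j}-\tfrac{\alpha(1+\alpha)(1+\xi)^{-1/\alpha-2}}{32w_j^2}\,T$, i.e. $c_j=\tfrac1{4w_j}+O(\alpha)$ with the correction term of sign $-\operatorname{sgn}(T)$. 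With $w_j\in[1-\tfrac\epsilon2,2)$ this yields $c_j\ge\tfrac18\bigl(1-O(\alpha)\bigr)\ge\tfrac1{16}$; for $T\ge0$ the correction only helps the upper bound, so $c_j\le\tfrac1{4w_j}\le\tfrac1{4(1-\epsilon/2)}\le\tfrac1{4(1-\epsilon)}$, while for $T<0$ we get $c_j\le\tfrac1{4(1-\epsilon/2)}+O(\alpha)$, still $\le\tfrac1{4(1-\epsilon)}$ because the gap $\tfrac1{4(1-\epsilon)}-\tfrac1{4(1-\epsilon/2)}\ge\tfrac\epsilon8$ dominates the $O(\alpha)$ correction once $\alpha\le\tfrac\epsilon4$ and $\epsilon\le\tfrac14$. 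The $T=0$ choice $c_j=\tfrac18$ is in range since $\tfrac1{4(1-\epsilon)}\ge\tfrac14$.

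The main obstacle is the lower bound $w_j\ge1-\tfrac\epsilon2$ (equivalently, that $\vz^{(k-1)}$ has not drifted too far negative): a priori the truncated gradients could sit near $-1$ over many iterations, driving $w_j$ toward $0$, which would make $(1+\mu_j T)^{-1/\alpha}$ — and hence $c_j$ — blow up and invalidate the step‑size bound needed for Lemma~\ref{lemma:grad-step}. Ruling this out is precisely the content of Proposition~\ref{prop:small-enough-z-k}, which is fed by the inductive hypothesis $G_{k-1}\le2\opt$; everything after that is careful constant‑chasing leaning on $\mu_j=\Theta(\alpha^2)$ (in particular, on the fact that a $-1/\alpha$ power of a $1+O(\alpha^2)$ base stays $1+O(\alpha)$).
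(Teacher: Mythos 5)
Your proof takes essentially the same route as the paper: both start from the closed-form multiplicative update $x_j^{(k+1)} = x_j^{(k)}\bigl(1 + \tfrac{a_k T_{\nabla_j f(\vx^{(k)})}/\eta}{1+z_j^{(k-1)}/\eta}\bigr)^{-1/\alpha}$, bound $1+z_j^{(k-1)}/\eta$ above by $2$ via $z_j^{(k-1)}\le A_{k-1}-\tfrac12\le\eta-\tfrac12$ and below by invoking Proposition~\ref{prop:small-enough-z-k} under the inductive hypothesis $G_{k-1}\le2\opt$, and then squeeze the resulting $(1+r)^{-1/\alpha}$ to read off $c_j$. The only cosmetic divergence is the last step: the paper sandwiches via Bernoulli's and exponential inequalities and then says the ``case analysis is omitted,'' while you carry out the second-order Lagrange–Taylor expansion explicitly and track the $O(\alpha)$ correction to show $c_j\in[\tfrac1{16},\tfrac1{4(1-\epsilon)}]$; this actually fills in the omitted constant-chasing and is a welcome elaboration rather than a different argument.
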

\begin{proof}
Recall that $x^{(k+1)}_j = \nabla_j \psi^*(\vz^{(k)}) = (1+z^{(k)}_j/\eta)^{-1/\alpha}$. As $z^{(k)}_j  = z^{(k-1)}_j + a_k T_{\nabla_j f(\vx^{(k)})}$:
\begin{equation}\label{eq:generic-change-in-x}
x^{(k+1)}_j = x^{(k)}_j \left(1 + \frac{a_k T_{\nabla_j f(\vx^{(k)})}/\eta}{1 + z^{(k-1)}_j/\eta}\right)^{-{1}/{\alpha}}.
\end{equation}
As $G_{k-1}\leq 2\opt,$ from Proposition \ref{prop:small-enough-z-k}, $1 + z^{(k-1)}_j/\eta \geq 1-\epsilon$. Therefore, $\frac{a_k T_{\nabla_j f(\vx^{(k)})}/\eta}{1 + z^{(k-1)}_j/\eta} \in [-\alpha^2, \alpha^2] \subset (-1, 1)$, and applying Bernoulli's and exponential inequalities:
\begin{equation}
\left(1 - \frac{a_k T_{\nabla_j f(\vx^{(k)})}}{\eta\alpha(1 + z^{(k-1)}_j/\eta)}\right) \leq \left(1 + \frac{a_k T_{\nabla_j f(\vx^{(k)})}/\eta}{1 + z^{(k-1)}_j/\eta}\right)^{-{1}/{\alpha}} \leq \exp\left(1 - \frac{a_k T_{\nabla_j f(\vx^{(k)})}}{\eta\alpha(1 + z^{(k-1)}_j/\eta)}\right).
\end{equation}
From the definition of $\vz^{(k)}$, we have that $z^{(k-1)}_j = \sum_{s=1}^{k-1} a_s T_{\nabla_j f(\vx^{(s)})} + \nabla_j \psi(\vx^{(0)})$, $\forall j$. We have already shown (in the proof of Lemma \ref{lemma:initial-gap}) that $\forall j$, $\nabla_j \psi(\vx^{(0)})\leq 1/2$. As $T_{\nabla_j f(\vx^{(s)})} \leq 1$ and $a_0 = 1$, it follows that $z^{(k-1)}_j \leq A_{k-1}- \frac{1}{2}.$ By the proposition's assumptions, $A_{k-1}\leq \eta$, and, therefore, $1 + z^{(k-1)}_j/\eta \leq 2$.

The rest of the proposition follows by case analysis for $T_{\nabla_j f(\vx^{(k)})} \in [-1, 0]$ and $T_{\nabla_j f(\vx^{(k)})} \in (0, 1]$, and is omitted.
\end{proof}

We can now apply Lemma \ref{lemma:grad-step} to obtain the desired decrease in the upper bound, as follows.

\begin{lemma}(Change in the upper bound.) \label{lemma:upper-bound-decrease}
Let $a_k = \frac{\alpha^2 \eta}{4}$. If $G_{k-1}\leq 2\opt$ and $A_{k-1}\leq \eta$, then:
$$
A_k U_k - A_{k-1}U_{k-1} \leq a_k f_{\alpha}(\vx^{(k)}) - A_k \frac{\alpha}{18} \sum_{j=1}^n \nabla_j f_{\alpha}(\vx^{(k)})T_{\nabla_j f(\vx^{(k)})}x^{(k)}_j. 
$$
\end{lemma}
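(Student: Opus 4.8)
The plan is to combine the step-size control of Proposition~\ref{prop:step-size} with the local multiplicative smoothness estimate of Lemma~\ref{lemma:grad-step}, after first rewriting the left-hand side so that the ``new'' weight $a_k$ is isolated. Since $U_k = f_{\alpha}(\vx^{(k+1)})$, $U_{k-1} = f_{\alpha}(\vx^{(k)})$ and $A_k = A_{k-1} + a_k$, we have $A_k U_k - A_{k-1}U_{k-1} = a_k f_{\alpha}(\vx^{(k)}) + A_k\bigl(f_{\alpha}(\vx^{(k+1)}) - f_{\alpha}(\vx^{(k)})\bigr)$, so it suffices to prove
\[ f_{\alpha}(\vx^{(k+1)}) - f_{\alpha}(\vx^{(k)}) \le -\frac{\alpha}{18}\sum_{j=1}^n \nabla_j f_{\alpha}(\vx^{(k)})\, T_{\nabla_j f(\vx^{(k)})}\, x_j^{(k)}. \]

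First I would apply Proposition~\ref{prop:step-size}: under the standing hypotheses $G_{k-1} \le 2\opt$, $A_{k-1}\le\eta$ and the choice $a_k = \alpha^2\eta/4$, it yields $x_j^{(k+1)} = \bigl(1 - c_j\alpha\, T_{\nabla_j f(\vx^{(k)})}\bigr) x_j^{(k)}$ with $c_j \in \bigl[\tfrac1{16}, \tfrac1{4(1-\epsilon)}\bigr]$. Since $\epsilon\le 1/4$ this forces $c_j \le 1/3 < 1/2$, and $\alpha < 1$, so Lemma~\ref{lemma:grad-step} applies with $\mB = \diag(\vbeta)$, $\beta_j = -c_j\alpha\, T_{\nabla_j f(\vx^{(k)})}$ and $\vx^{(k)} + \mB\vx^{(k)} = \vx^{(k+1)}$, giving
\[ f_{\alpha}(\vx^{(k+1)}) - f_{\alpha}(\vx^{(k)}) \le -\alpha\sum_{j=1}^n c_j(1-2c_j)\,\nabla_j f_{\alpha}(\vx^{(k)})\, T_{\nabla_j f(\vx^{(k)})}\, x_j^{(k)}. \]

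It remains to replace the coordinate-dependent factor $c_j(1-2c_j)$ by the universal constant $\tfrac1{18}$, which is only legitimate once each summand is known to be nonnegative. I would verify this by noting $x_j^{(k)}\ge 0$ and, from the gradient formula~(\ref{eq:grad+hessian}) together with $\mA\ge\zeros$ and $\vx^{(k)}\ge\zeros$, that $\nabla_j f_{\alpha}(\vx^{(k)})\ge -1$; a two-case split on whether $\nabla_j f_{\alpha}(\vx^{(k)})>1$ or lies in $[-1,1]$ then shows $\nabla_j f_{\alpha}(\vx^{(k)})\, T_{\nabla_j f(\vx^{(k)})}$ equals $\nabla_j f_{\alpha}(\vx^{(k)})\ge 1$ in the first case and $\bigl(\nabla_j f_{\alpha}(\vx^{(k)})\bigr)^2\ge 0$ in the second, hence is nonnegative in all cases. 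With every summand $\ge 0$ and $\alpha>0$, we may lower-bound $c_j(1-2c_j)$ by its infimum over $c_j\in[\tfrac1{16},\tfrac1{4(1-\epsilon)}]\subseteq[\tfrac1{16},\tfrac13]$; since $c\mapsto c(1-2c)$ is concave with vertex at $c=1/4$, this infimum is attained at the left endpoint $c_j=1/16$ and equals $\tfrac7{128}$. (The paper's $\tfrac1{18}$ is essentially a mild rounding of $\tfrac7{128}$; carrying $\tfrac7{128}$ throughout — or noting the lower bound on $c_j$ in Proposition~\ref{prop:step-size} can be taken slightly above $\tfrac1{16}$ — makes the inequality literally correct and only alters a downstream constant.) Substituting this bound, multiplying by $A_k>0$ and adding back $a_k f_{\alpha}(\vx^{(k)})$ gives the claim.

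There is no real conceptual obstacle here: the step size and the local smoothness estimate have already done the substantive work, and the rest is careful bookkeeping. The two points that need attention are (i) checking that the multiplicative update supplied by Proposition~\ref{prop:step-size} genuinely satisfies the hypotheses of Lemma~\ref{lemma:grad-step} — in particular $c_j < 1/2$, which is exactly where $\epsilon\le 1/4$ is used — and (ii) establishing the sign of each term $\nabla_j f_{\alpha}(\vx^{(k)})\, T_{\nabla_j f(\vx^{(k)})}\, x_j^{(k)}$ \emph{before} weakening the coefficient, since the constant-replacement step is invalid if any summand could be negative.
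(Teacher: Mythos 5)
Your proof is correct and follows exactly the argument the paper compresses into one line: isolate $A_k\bigl(f_\alpha(\vx^{(k+1)}) - f_\alpha(\vx^{(k)})\bigr)$, feed the multiplicative update of Proposition~\ref{prop:step-size} into Lemma~\ref{lemma:grad-step}, and verify that each summand $\nabla_j f_\alpha(\vx^{(k)})\, T_{\nabla_j f(\vx^{(k)})}\, x_j^{(k)}$ is nonnegative before weakening $c_j(1-2c_j)$ to a constant. Your observation about the constants is also correct: the stated $c_j \geq \tfrac{1}{16}$ yields only $\tfrac{7}{128} < \tfrac{1}{18}$, and the slack lives in Proposition~\ref{prop:step-size}, whose omitted case analysis (using $1 + z_j^{(k-1)}/\eta \leq 2$) actually gives $c_j \geq \tfrac{1}{8}(1 - O(\alpha))$, so $c_j(1-2c_j) \geq \tfrac{3}{32} > \tfrac{1}{18}$ once that lower bound is tightened.
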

\begin{proof}
Follows directly by applying Proposition \ref{prop:step-size} and Lemma \ref{lemma:grad-step}, as $U_k = f_{\alpha}(\vx^{(k+1)})$.
\end{proof}
Next, we bound the change in the lower bound, which will then suffice to show that $A_kG_k - A_{k-1}G_{k-1}$, completing the proof of the Main Lemma.

\begin{lemma}\label{lemma:lower-bound-change}(Change in the lower bound.) 
Let $a_k = \frac{\alpha^2 \eta}{4}$. If $G_{k-1}\leq 2\opt$ and $A_{k-1}\leq \eta$, then:
$$
A_{k}L_k - A_{k-1}L_{k-1} \geq a_k \left(f_{\alpha}(\vx^{(k)})-\innp{\nabla f_{\alpha}(\vx^{(k)}), \vx^{(k)}}\right) + a_k\innp{T_{\nabla f(\vx^{(k)})}, \vx^{(k)}} - \frac{{a_k}\alpha\eta}{4}\sum_{j}x^{(k)}_j \left(T_{\nabla_j f(\vx^{(k)})}\right)^2.
$$
\end{lemma}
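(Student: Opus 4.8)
The plan is to work directly from the closed form of the lower bound given in Lemma~\ref{lemma:lower-bound}, specialized to $\psi$ from Proposition~\ref{prop:regularizer-and-its-arg}. Writing
$$
A_k L_k = \sum_{s=0}^k a_s\big(f_{\alpha}(\vx^{(s)}) - \innp{\nabla f_{\alpha}(\vx^{(s)}), \vx^{(s)}}\big) + \psi^*(\vz^{(k)}) + \phi(\vx^*),
$$
and subtracting the analogous expression for $A_{k-1}L_{k-1}$, the $\phi(\vx^*)$ terms and all summands with index $s \leq k-1$ cancel, leaving
$$
A_k L_k - A_{k-1}L_{k-1} = a_k\big(f_{\alpha}(\vx^{(k)}) - \innp{\nabla f_{\alpha}(\vx^{(k)}), \vx^{(k)}}\big) + \psi^*(\vz^{(k)}) - \psi^*(\vz^{(k-1)}).
$$
So the entire task reduces to lower-bounding $\psi^*(\vz^{(k)}) - \psi^*(\vz^{(k-1)})$ by $a_k\innp{T_{\nabla f(\vx^{(k)})}, \vx^{(k)}} - \frac{a_k\alpha\eta}{4}\sum_j x_j^{(k)}\big(T_{\nabla_j f(\vx^{(k)})}\big)^2$.

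For this, I would Taylor-expand $\psi^*$ along the segment from $\vz^{(k-1)}$ to $\vz^{(k)} = \vz^{(k-1)} + a_k T_{\nabla f(\vx^{(k)})}$. Since $\psi^*$ is convex, the first-order term is a \emph{lower} bound on the increment in the wrong direction; instead I would write the exact second-order form $\psi^*(\vz^{(k)}) - \psi^*(\vz^{(k-1)}) = a_k\innp{\nabla\psi^*(\vz^{(k-1)}), T_{\nabla f(\vx^{(k)})}} + \frac{a_k^2}{2}\innp{\nabla^2\psi^*(\vz^{(k-1)} + t\, a_k T_{\nabla f(\vx^{(k)})})\, T_{\nabla f(\vx^{(k)})}, T_{\nabla f(\vx^{(k)})}}$ for some $t \in [0,1]$, and note both terms on the right are handled separately. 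The first-order term: $\nabla\psi^*(\vz^{(k-1)}) = \vx^{(k)}$ by Proposition~\ref{prop:regularizer-and-its-arg} and the update rule, so it equals exactly $a_k\innp{T_{\nabla f(\vx^{(k)})}, \vx^{(k)}}$, matching the target. The Hessian term is nonnegative (so it only helps), but we still need it to not \emph{overshoot}; actually since we want a lower bound and the curvature term is $\geq 0$, we can simply drop it — wait, that would give $\psi^*(\vz^{(k)}) - \psi^*(\vz^{(k-1)}) \geq a_k\innp{T_{\nabla f(\vx^{(k)})}, \vx^{(k)}}$, which is \emph{stronger} than claimed, with no need for the negative correction term. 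So the cleaner route is: by convexity of $\psi^*$, $\psi^*(\vz^{(k)}) \geq \psi^*(\vz^{(k-1)}) + \innp{\nabla\psi^*(\vz^{(k-1)}), \vz^{(k)} - \vz^{(k-1)}} = \psi^*(\vz^{(k-1)}) + a_k\innp{\vx^{(k)}, T_{\nabla f(\vx^{(k)})}}$, and we are done immediately, with the last (negative) term in the statement simply slack.

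The subtlety — and I expect this to be the one genuine point to verify rather than a trivial convexity appeal — is whether the lower bound $L_k$ in Lemma~\ref{lemma:lower-bound} is stated with the \emph{unrestricted} $\psi^*$ (minimization over all of $\mathbb{R}^n$) or the restricted version over $\vx \geq \zeros$; the proof of Lemma~\ref{lemma:lower-bound} shows $\psi^*(\vz^{(k)}) = \min_{\vx \geq \zeros}\{\innp{\vz^{(k)}, \vx} - \psi(\vx)\}$, i.e. with the nonnegativity constraint, and for the specific $\psi$ the unconstrained minimizer $(1+z_j/\eta)^{-1/\alpha}$ is automatically positive, so the constraint is inactive and Danskin/Fact~\ref{fact:danskin} gives $\nabla\psi^*(\vz^{(k-1)}) = \vx^{(k)}$ cleanly — this is where $G_{k-1} \leq 2\opt$ and $A_{k-1} \leq \eta$ enter, via Proposition~\ref{prop:small-enough-z-k}, to guarantee $1 + z_j^{(k-1)}/\eta > 0$ so that $\vx^{(k)}$ is well-defined and the restricted conjugate coincides with the unrestricted one. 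Once that is noted, the inequality $\psi^*(\vz^{(k)}) - \psi^*(\vz^{(k-1)}) \geq a_k\innp{\vx^{(k)}, T_{\nabla f(\vx^{(k)})}}$ follows from convexity of $\psi^*$ together with $\nabla\psi^*(\vz^{(k-1)}) = \vx^{(k)}$, and combining with the telescoped identity for $A_k L_k - A_{k-1}L_{k-1}$ above yields the claim (indeed with room to spare — the negative $\frac{a_k\alpha\eta}{4}\sum_j x_j^{(k)}(T_{\nabla_j f(\vx^{(k)})})^2$ term is not even needed, which is consistent with the upper-bound decrease in Lemma~\ref{lemma:upper-bound-decrease} being the tight side of the gap estimate).
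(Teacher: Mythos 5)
There is a genuine error: you have the convexity of $\psi^*$ backwards, and this reverses the crucial inequality. In this paper $\psi$ is \emph{concave} (it is $\eta(-\innp{\ones,\vx} + \sum_j x_j^{1-\alpha}/(1-\alpha))$, whose Hessian is $-\eta\alpha\,\diag(x_j^{-\alpha-1}) \prec 0$), and the conjugate used throughout is the \emph{concave} conjugate $\psi^*(\vz) = \min_{\vx\geq\zeros}\{\innp{\vz,\vx} - \psi(\vx)\}$ (see Definition~\ref{def:cvx-cncv-conj} and the last line of the proof of Lemma~\ref{lemma:lower-bound}). A concave conjugate of a concave function is concave, and you can confirm this directly from Proposition~\ref{prop:regularizer-and-its-arg}: $\psi^*(\vz) = -\frac{\eta\alpha}{1-\alpha}\sum_j(1+z_j/\eta)^{-(1-\alpha)/\alpha}$ has $\nabla^2\psi^*(\vz) = -\frac{1}{\alpha\eta}\diag\big((1+z_j/\eta)^{-1/\alpha-1}\big) \prec 0$. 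So the first-order Taylor expansion is an \emph{upper} bound on $\psi^*(\vz^{(k)})$, not a lower bound; your ``cleaner route'' produces the inequality $\psi^*(\vz^{(k)}) \leq \psi^*(\vz^{(k-1)}) + a_k\innp{\vx^{(k)}, T_{\nabla f(\vx^{(k)})}}$, which points in the wrong direction, and the negative correction term in the lemma statement is not slack at all --- it is forced by the (negative) curvature of $\psi^*$.

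The correct argument, which the paper carries out, must retain the second-order Taylor term and bound the Hessian $\nabla^2\psi^*$ from below (in the Loewner order) along the step. This is where the inductive hypothesis $G_{k-1}\leq 2\opt$, the constraint $A_{k-1}\leq\eta$, and the specific step size $a_k = \frac{\alpha^2\eta}{4}$ actually enter: via Proposition~\ref{prop:step-size} one shows $x_j^{(k+1)}$ stays within a constant factor of $x_j^{(k)}$, hence $\nabla^2\psi^*(\vz) \succeq -\frac{2}{\alpha}\diag(\vx^{(k)})$ along the relevant segment, and then $\frac{a_k^2}{\alpha} = \frac{a_k\alpha\eta}{4}$ produces exactly the stated correction term. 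Your telescoping decomposition of $A_kL_k - A_{k-1}L_{k-1}$ and the identification $\nabla\psi^*(\vz^{(k-1)}) = \vx^{(k)}$ are both correct and match the paper; the gap is in dropping the Hessian term, which is only legitimate for a convex conjugate, not this one.
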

\begin{proof}
By the definition of the lower bound (Lemma \ref{lemma:lower-bound}):
\begin{equation}\label{eq:generic-lb-change}
A_{k}L_k - A_{k-1}L_{k-1} = a_k \left(f_{\alpha}(\vx^{(k)})-\innp{\nabla f_{\alpha}(\vx^{(k)}), \vx^{(k)}}\right) + \psi^*(\vz^{(k)}) - \psi^*(\vz^{(k-1)}).
\end{equation}
The rest of the proof follows by a Taylor approximation of $\psi^*(\vz^{(k)}) - \psi^*(\vz^{(k-1)})$:
\begin{align*}
&\psi^*(\vz^{(k)}) - \psi^*(\vz^{(k-1)}) \geq \innp{\nabla \psi^*(\vz^{(k-1)}), \vz^{(k)}-\vz^{(k-1)}}\\
&+ \frac{1}{2}\min\left\{\innp{\nabla^2\psi^*(\vz^{(k-1)})(\vz^{(k)}-\vz^{(k-1)}), \vz^{(k)}-\vz^{(k-1)}}, \innp{\nabla^2\psi^*(\vz^{(k)})(\vz^{(k)}-\vz^{(k-1)}), \vz^{(k)}-\vz^{(k-1)}} \right\}.
\end{align*}
Recall that $\vz^{(k)} = \vz^{(k-1)} + a_k T_{\nabla f(\vx^{(k)})}$ and $\vx^{(k)} = \nabla\psi^*(\vz^{(k-1)})$. Observe that $\nabla^2\psi^*(\vz^{(k)}) = -\frac{1}{\alpha}\diag\left(\vx^{(k+1)}\right)\cdot \diag\left(\ones + \vz^{(k)}/\eta \right)^{-1}$. Applying Proposition \ref{prop:step-size}, $\nabla \psi^*(\vz^{(k)}) \geq -\frac{2}{\alpha} \diag (\vx^{(k)})$. Therefore:
\begin{align}
\psi^*(\vz^{(k)}) - \psi^*(\vz^{(k-1)}) &\geq a_k\innp{T_{\nabla f(\vx^{(k)})}, \vx^{(k)}} - \frac{{a_k}^2}{\alpha}\sum_{j}x^{(k)}_j \left(T_{\nabla_j f(\vx^{(k)})}\right)^2\notag\\
&= a_k\innp{T_{\nabla f(\vx^{(k)})}, \vx^{(k)}} - \frac{{a_k}\alpha\eta}{4}\sum_{j}x^{(k)}_j \left(T_{\nabla_j f(\vx^{(k)})}\right)^2\notag.
\end{align}
Combining with (\ref{eq:generic-lb-change}), the proof follows.
\end{proof}

We are now ready to complete the proof of the Main Lemma.

\begin{proof}[Proof of Lemma \ref{lemma:main-lemma}]
The proof follows by showing that Lemmas \ref{lemma:upper-bound-decrease} and \ref{lemma:lower-bound-change} imply that $A_k G_k - A_{k-1}G_{k-1}\leq 0$. Applying these two lemmas:
\begin{align}
A_k G_k - A_{k-1}G_{k-1} \leq & a_k \innp{\nabla f_{\alpha}(\vx^{(k)})-T_{\nabla f(\vx^{(k)})},\vx^{(k)}}\notag\\
&+ \frac{{a_k}\alpha\eta}{4}\sum_{j}x^{(k)}_j \left(T_{\nabla_j f(\vx^{(k)})}\right)^2 - A_k \frac{\alpha}{18} \sum_{j=1}^n x^{(k)}_j \cdot \nabla_j f_{\alpha}(\vx^{(k)})T_{\nabla_j f(\vx^{(k)})}%\notag\\
=%& 
\sum_{j=1}^n \sigma_j,\notag
\end{align}
where 
$$\sigma_j = a_k \left(\nabla_j f_{\alpha}(\vx^{(k)})-T_{\nabla_j f(\vx^{(k)})}\right) x_j^{(k)}
+ \frac{{a_k}\alpha\eta}{4}x^{(k)}_j \left(T_{\nabla_j f(\vx^{(k)})}\right)^2 - A_k \frac{\alpha}{18} x^{(k)}_j \cdot \nabla_j f_{\alpha}(\vx^{(k)})T_{\nabla_j f(\vx^{(k)})}.$$
By the choice of parameters, $A_k \geq 1$, and $A_k \frac{\alpha}{18}\geq \frac{\alpha}{18} \geq a_k + \frac{a_k \alpha\eta}{4}$. When $\nabla_j f_{\alpha}(\vx^{(k)}) \leq 1$, then $\nabla_j f_{\alpha}(\vx^{(k)}) = T_{\nabla_j f(\vx^{(k)})}$, and we immediately have $\sigma_j \leq 0$. Suppose $\nabla_j f_{\alpha}(\vx^{(k)}) > 1$. Then $T_{\nabla_j f(\vx^{(k)})} = 1$, and we have:
\begin{align*}
\sigma_j \leq x_{j}^{(k)}\left(a_k \nabla_j f_{\alpha}(\vx^{(k)}) + \frac{a_k\alpha\eta}{4} - \frac{\alpha}{18}\nabla_j f_{\alpha}(\vx^{(k)})\right) \leq 0,
\end{align*}
as $\frac{\alpha}{18} \geq a_k + \frac{a_k \alpha\eta}{4}$.
\end{proof}

%We are now ready to prove the convergence of Algorithm \ref{algo:pc-lp}.

%
%
\subsection{Extracting a Covering Solution} 
We now argue that the algorithm also constructs an approximate covering solution. Essentially, our lower bound  can be interpreted as a regularized covering objective. Towards that goal, denote for $k \geq 1$:
\begin{equation}\label{eq:dual-soln-def}
y_i^{(k)} = (\mA\vx^{(k)})_i^{1/\alpha}, \quad \vyb^{(k)} = \frac{\sum_{s=1}^k a_s \vy^{(s)}}{A_k - 1}.
\end{equation}
Using the results established earlier in the section, we have the following result.

\begin{theorem}\label{thm:covering}
Let $\vy^{(k)}, \vyb^{(k)}$ be defined via (\ref{eq:dual-soln-def}) and points $\vx^{(k)}$ constructed by Algorithm \ref{algo:pc-lp} (\textsc{Packing\-CoveringLP}). Then $\mA^T\vyb^{(K+1)}\geq (1-2\epsilon)\ones$ and $\innp{\ones, \vyb^{(K+1)}}\leq (1+4\epsilon)\opt$.
\end{theorem}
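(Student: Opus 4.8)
The plan is to read off the covering guarantee from the lower bound $L_K$ constructed in Lemma \ref{lemma:lower-bound}, interpreting the inner minimization in $L_K$ as (the value of) a regularized covering LP whose optimal solution is essentially $\vyb^{(K+1)}$. Recall from the smoothing setup that $\phi_1^*(\mA\vx - \ones)$ was obtained by a maximization over $\vy \geq \zeros$, and that for $\vx = \vx^{(s)}$ the maximizer is exactly $y_i = (\mA\vx^{(s)})_i^{1/\alpha}$; this is precisely the definition \eqref{eq:dual-soln-def}. Thus the weighted average $\vyb^{(K+1)} = \frac{\sum_{s=1}^{K+1} a_s \vy^{(s)}}{A_{K+1} - 1}$ is a convex combination (note $\sum_{s=1}^{K+1} a_s = A_{K+1} - a_0 = A_{K+1} - 1$) of these per-iteration maximizers. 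First I would write out, for each $s$, the identity $f_\alpha(\vx^{(s)}) - \innp{\nabla f_\alpha(\vx^{(s)}), \vx^{(s)}} = -\frac{1}{1+\alpha}\sum_i (y_i^{(s)})^{1+\alpha} = -\innp{\ones,\vy^{(s)}} + \phi_1(\vy^{(s)})$ together with $\nabla_i f_\alpha(\vx^{(s)}) = -1 + (\mA^T\vy^{(s)})_i$ — wait, more precisely $\nabla_j f_\alpha(\vx^{(s)}) = -1 + \sum_i A_{ij} y_i^{(s)} = (\mA^T\vy^{(s)})_j - 1$ — so that the gradient term in $L_K$ encodes the covering slack $\mA^T\vy^{(s)} - \ones$.

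The core computation: combine Theorem \ref{thm:packing} (which gives $G_K \le \epsilon G_0 \le 2\epsilon\,\opt$, hence $U_K - L_K \le 2\epsilon\,\opt$) with the fact that $U_K = f_\alpha(\vx^{(K+1)}) \ge -\opt$ and the duality bound $f_\alpha(\vx^*) \le f_\alpha((1-\epsilon/2)\vx^{\mathrm{packing}*}) \le -(1-\epsilon/4)\opt$, so $L_K \ge f_\alpha(\vx^*) - G_K \ge -(1+O(\epsilon))\opt$. Now expand $L_K$ using the first form in Lemma \ref{lemma:lower-bound} with $\phi(\vx) = \psi(\vx) - \innp{\nabla\psi(\vx^{(0)}) - a_0 T_{\nabla f(\vx^{(0)})}, \vx}$: the term $\min_{\vx \ge \zeros}\{\sum_s a_s \innp{T_{\nabla f(\vx^{(s)})}, \vx} - \phi(\vx)\} = \psi^*(\vz^{(K)})$ is $\ge -3\opt\cdot\frac{1-\alpha}{\alpha}\cdot\frac{\alpha}{1-\alpha}\cdots$ — more simply, from Proposition \ref{prop:small-enough-z-k} and Proposition \ref{prop:regularizer-and-its-arg}, $|\psi^*(\vz^{(K)})| \le 3\opt$ (up to the $1/A_K$ normalization this is $O(\epsilon\,\opt)$ after multiplying by $A_K \ge \eta = 1/\epsilon$... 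I need to track the $A_K$ factor carefully). After clearing denominators by $A_K$ and isolating, the bound $A_K L_K \ge -(1+O(\epsilon)) A_K\,\opt$ becomes, roughly,
$$\sum_{s=0}^{K} a_s\Big(-\tfrac{1}{1+\alpha}\textstyle\sum_i (y_i^{(s)})^{1+\alpha}\Big) + \psi^*(\vz^{(K)}) + \phi(\vx^*) \ge -(1+O(\epsilon))A_K\,\opt,$$
and I would then show: (i) $-\phi(\vx^*) \le \frac{3}{2}\opt$ as in Lemma \ref{lemma:initial-gap}; (ii) $|\psi^*(\vz^{(K)})| = O(\opt)$; (iii) the gradient/linear terms force $\mA^T\vyb^{(K+1)} \ge (1-2\epsilon)\ones$ — this comes from observing that if some coordinate $j$ had $(\mA^T\vyb^{(K+1)})_j < 1 - 2\epsilon$, then since $\vz^{(K)}_j = \sum_s a_s(T_{\nabla_j f(\vx^{(s)})}) + \nabla_j\psi(\vx^{(0)})$ and $T_{\nabla_j f(\vx^{(s)})} \le \nabla_j f_\alpha(\vx^{(s)}) = (\mA^T\vy^{(s)})_j - 1$, the quantity $\vz^{(K)}_j/A_K$ would be bounded above by roughly $(\mA^T\vyb^{(K+1)})_j - 1 < -2\epsilon$, contradicting Proposition \ref{prop:small-enough-z-k} which says $\vz^{(K)}_j \ge -(\epsilon\eta/2)\ones$, i.e. $\vz^{(K)}_j/A_K \ge -\epsilon/2$ (using $A_K \ge \eta$); and (iv) $\innp{\ones,\vyb^{(K+1)}} \le (1+4\epsilon)\opt$ — this follows from the generalized-mean / power-mean comparison $\innp{\ones,\vy^{(s)}} = \sum_i (\mA\vx^{(s)})_i^{1/\alpha}$ versus $\frac{1}{1+\alpha}\sum_i (\mA\vx^{(s)})_i^{(1+\alpha)/\alpha}$, i.e. relating $\sum_i y_i$ to $\sum_i y_i^{1+\alpha}$ on the relevant range, combined with the lower bound on $L_K$ which upper-bounds $\sum_s a_s \sum_i (y_i^{(s)})^{1+\alpha}$ by $O(A_K\,\opt)$, and the observation that the covering objective is $\frac{1}{\alpha}$-close to its smoothed version exactly as in Proposition \ref{prop:smoothened-problem}.

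The main obstacle I anticipate is step (iv): bookkeeping the relationship between $\innp{\ones,\vyb^{(K+1)}}$ (a linear average of $\vy^{(s)}$) and the quantity $\sum_s a_s\innp{\ones,\vy^{(s)}}$ that the lower bound actually controls through $\sum_i (y_i^{(s)})^{1+\alpha}$, while simultaneously keeping track of whether $\vy^{(s)}$ lives in a range where $(1+\alpha)$-powers and first powers are within a $(1+O(\epsilon))$ factor. Concretely, one must argue that the mass of $\vy^{(s)}$ on coordinates where $y_i^{(s)}$ is large (say $\ge 1 + \Theta(\epsilon)$) is negligible — this is the dual analogue of the "approximate barrier property" (Part \ref{it:approx-barr} of Proposition \ref{prop:smoothened-problem}): such coordinates would make $\sum_i (y_i^{(s)})^{1+\alpha}$ blow up past the $O(A_K\,\opt/a_s)$-type budget coming from $L_K$, averaged appropriately over $s$. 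Once that truncation argument is in place, $\innp{\ones,\vyb^{(K+1)}} \le (1+O(\epsilon))\frac{1}{A_K-1}\sum_s a_s\cdot\frac{1}{1+\alpha}\sum_i(y_i^{(s)})^{1+\alpha} + O(\epsilon\opt) \le (1+4\epsilon)\opt$, and combined with step (iii) the theorem follows. The constants ($2\epsilon$ and $4\epsilon$) should come out by being slightly careful with the $\epsilon \le 1/4$ assumption and the various $(1+\epsilon/2)$, $(1-\epsilon)$ slacks already appearing in Proposition \ref{prop:smoothened-problem} and Lemma \ref{lemma:initial-gap}.
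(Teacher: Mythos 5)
Your overall plan matches the paper's: feasibility of $\vyb^{(K+1)}$ is read off from Proposition~\ref{prop:small-enough-z-k} via $\nabla f_\alpha(\vx^{(s)}) = -\ones + \mA^T\vy^{(s)}$ and $T_{\nabla f} \leq \nabla f_\alpha$ (your contradiction phrasing is a cosmetic variant of the paper's direct rearrangement), and the objective bound comes from the $L$-side of the gap via the identity $f_\alpha(\vx^{(s)}) - \innp{\nabla f_\alpha(\vx^{(s)}), \vx^{(s)}} = -\tfrac{1}{1+\alpha}\sum_i (y_i^{(s)})^{1+\alpha}$ and a power-versus-linear comparison on $\vy$. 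Two minor points: all indices should be $K+1$ (not $K$), since $\vyb^{(K+1)}$ averages over $s=1,\dots,K+1$, and the paper inserts one step you skip — Jensen applied to the convex map $y\mapsto y^{1+\alpha}$, which converts $\sum_s a_s\sum_i(y_i^{(s)})^{1+\alpha}$ into $\sum_i(\overline y_i^{(K+1)})^{1+\alpha}$ before doing the per-coordinate comparison.

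The concrete gap is in your step (iv): you have the truncation direction backwards. To pass from the controlled quantity $\tfrac{1}{1+\alpha}\sum_i \overline y_i^{1+\alpha} \lesssim \opt$ to the desired $\sum_i \overline y_i \lesssim \opt$, one needs the pointwise inequality $\overline y_i \leq (1+O(\eps))\tfrac{\overline y_i^{1+\alpha}}{1+\alpha}$, equivalently $\overline y_i^{\alpha}\geq (1-O(\eps))(1+\alpha)$. This fails precisely when $\overline y_i$ is \emph{small} (for $y<1$ one has $y^{1+\alpha}<y$), and holds automatically when $\overline y_i$ is large. The paper therefore cuts off coordinates \emph{below} the threshold $\eps^2/(m\|\mA\|_\infty)$, whose total contribution is at most $m\cdot\eps^2/(m\|\mA\|_\infty)\leq\eps^2\opt$; above the threshold the choice of $\alpha$ guarantees $\overline y_i^{\alpha}\geq 1-O(\eps)$. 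Your proposal instead truncates coordinates where $y_i^{(s)}\geq 1+\Theta(\eps)$, arguing they would blow up the power-sum budget. That observation is true but irrelevant: the budget $\sum_i y_i^{1+\alpha}\lesssim\opt$ is already established from $L$, and large coordinates pose no difficulty in transferring it to $\sum_i y_i$ — they are exactly where the transfer is easiest. As written, your truncation would not yield $\innp{\ones,\vyb^{(K+1)}}\leq(1+4\eps)\opt$; you need to replace the upper cutoff with the lower cutoff and bound the small-coordinate tail by $O(\eps^2\opt)$.
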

\begin{proof}
From Proposition \ref{prop:small-enough-z-k}, $\vz^{(K+1)}\geq -(\epsilon\eta/2)\ones$. Recall that $\vz^{(K+1)} = \sum_{s=1}^{K+1}a_s \nabla f_{\alpha}(\vx^{(s)}) + \nabla \psi(\vx^{(0)})$ and that $\nabla \psi(\vx^{(0)})\leq \frac{1}{2}$ (from the proof of Lemma \ref{lemma:initial-gap}). As $A_{K+1} \geq \eta$, $\nabla f_{\alpha}(\vx^{(s)}) = -\ones + \mA^T\vy^{(s)}$, and $\eta = \frac{1}{\epsilon}$:
\begin{align}
&\frac{1}{A_{K+1}-1}\left(\sum_{s=1}^{K+1}a_s \nabla f_{\alpha}(\vx^{(s)}) + \nabla \psi(\vx^{(0)})\right) \geq - \frac{\epsilon}{2}\cdot \frac{\eta}{A_{K+1}-1}\ones\notag \\
\Rightarrow\quad & \mA^T\vyb^{(K+1)} \geq \left(1 - \frac{\epsilon}{1-\epsilon} - \frac{\epsilon}{2(1-\epsilon)}\right)\ones \geq (1-2\epsilon)\ones. \notag
\end{align}

For the second part of the theorem, as $G_{K+1}\leq 2\epsilon \opt$ and $U_{K+1} \geq -(1+\epsilon/2)\opt$, we have that $L_{K+1} \geq -(1+5\epsilon/2)\opt$. Since $\psi^*(\vz) \leq 0$ and $\phi(\vx^*) \geq 0$, it follows that:
\begin{equation}
\frac{1}{A_{K+1}}\sum_{s=0}^{K+1} a_s\left(f_{\alpha}(\vx^{(s)}) - \innp{\nabla f_{\alpha}(\vx^{(s)}), \vx^{(s)}}\right) \geq -(1+5\epsilon/2)\opt.
\end{equation}
Observe that: 
$$f_{\alpha}(\vx^{(s)}) - \innp{\nabla f_{\alpha}(\vx^{(s)}), \vx^{(s)}} = \left(\frac{\alpha}{1+\alpha}-1\right)\sum_{i=1}^m \left(y_i^{(s)}\right)^{1+\alpha} = -\frac{1}{1+\alpha}\sum_{i=1}^m \left(y_i^{(s)}\right)^{1+\alpha}.$$
Applying Jensen's Inequality:
\begin{equation}
-\frac{1}{1+\alpha} \sum_{i=1}^m \left(\overline{y}_i^{(K+1)}\right)^{1+\alpha} \geq - \frac{1}{1-\epsilon}(1+5\epsilon/2)\opt. 
\end{equation}
It is not hard to verify that for $y \geq \frac{\epsilon^2}{m\|\mA\|_{\infty}}$, $\frac{y^{1+\alpha}}{1+\alpha} \geq y$. Therefore:
\begin{align}
\sum_{i: \overline{y}_i^{(K+1)} \geq \epsilon^2/(m\|\mA\|_{\infty})} \overline{y}_i^{(K+1)} \leq \frac{1}{1+\alpha} \sum_{i=1}^m \left(\overline{y}_i^{(K+1)}\right)^{1+\alpha} \leq \frac{1}{1-\epsilon}(1+5\epsilon/2)\opt. 
\end{align}
As $\sum_{i: \overline{y}_i^{(s)} < \epsilon^2/(m\|\mA\|_{\infty})} \overline{y}_i^{(s)} \leq \frac{\epsilon^2}{\|\mA\|_{\infty}}\leq \epsilon^2 \opt$, we get the final bound:
$$
\innp{\ones, \vyb^{(K+1)}} \leq \left( \frac{1+5\epsilon/2}{1-\epsilon} +\epsilon^2\right)\opt \leq (1+4\epsilon)\opt,
$$
as claimed.
\end{proof}
%%%%%%%%%%%%%%%%%%%%%%%%%%%%%%%%%%%%%%%%%%%%%%
%\section{Conclusion}
\bibliographystyle{abbrv}
{\small
\bibliography{references}
}
\end{document}